\newtheorem{defin}{Definition}[section]
\newtheorem{prop}{Proposition}[section]
\newtheorem{lemma}{Lemma}[section]
\newtheorem{crit}{Criterion}[section]
\newtheorem{result}{Result}
\newtheorem{summary}{Summary}
\newcommand{\ket}[1]{|#1\rangle}
\newcommand{\bra}[1]{\langle #1 |}
\newcommand{\nn}{\nonumber}
\begin{document}

\title{A cone approach to the quantum separability problem}

\author{D. Salgado}
\email[]{dsalgado@nebrija.es}
\affiliation{D.G.\ Metodolog\'{\i}a, Calidad y Tecnolog\'{\i}as de la Informaci\'{o}n y las Comunicaciones, Insto.\ Nacional de Estad\'{\i}stica, 28071 Madrid (Spain)}
\affiliation{Dpto.\ Ingenier\'{\i}a Inform\'{a}tica, Universidad Antonio de Nebrija, 28040 Madrid (Spain)}

\author{J.L. S\'{a}nchez-G\'{o}mez}
\email[]{jl.sanchezgomez@uam.es}
\affiliation{Dpto.\ F\'{\i}sica Te\'{o}rica, Universidad Aut\'{o}noma de Madrid, 28049 Madrid (Spain)}

\author{M. Ferrero}
\email[]{maferrero@uniovi.es}
\affiliation{Dpto.\ F\'{\i}sica, Universidad de Oviedo, 33007 Oviedo (Spain)}

\date{\today}

\begin{abstract}
Exploiting the cone structure of the set of unnormalized mixed quantum states, we offer an approach to detect separability independently of the dimensions of the subsystems. We show that any mixed quantum state can be decomposed as $\rho=(1-\lambda)C_{\rho}+\lambda E_{\rho}$, where $C_{\rho}$ is a separable matrix whose rank equals that of $\rho$ and the rank of $E_{\rho}$ is strictly lower than that of $\rho$. With the simple choice $C_{\rho}=M_{1}\otimes M_{2}$ we have a necessary condition of separability in terms of $\lambda$, which is also sufficient if the rank of $E_{\rho}$ equals $1$. We give a first extension of this result to detect genuine entanglement in multipartite states and show a natural connection between the multipartite separability problem and the classification of pure states under stochastic local operations and classical communication (SLOCC). We argue that this approach is not exhausted with the first simple choices included herein.
\end{abstract}

\pacs{03.65.Ud, 03.67.Mn}

\keywords{Entanglement, Separability, Cone, Bipartite, Multipartite}

\maketitle

\section{Introduction}
\textquotedblleft Entanglement is [\dots] the characteristic trait of quantum mechanics\textquotedblright\ \cite{Sch35a}. This statement remains valid, even stronger, more than 70 years after its formulation. Nowadays entanglement is believed to lie behind most quantum phenomena and is still considered the source of conceptual challenges \cite{Gis09a}. In a few words, entanglement stands for the presence of quantum correlations, i.e. statistical correlations which cannot be reproduced using classical probability theory. Despite its importance, we are still far from a full comprehension of this fundamental feature of Nature, not to mention the fact that the problem of discerning whether a given composite quantum system portrays entanglement has only a partial solution. By and large, a complete understanding will follow only after apprehending the notion of quantum correlation both qualitatively and quantitatively for all sort of composite systems, either bipartite or multipartite and either with infinite or with finite dimensions.\\

Today a big gap exists between our understanding of bipartite and multipartite entanglement, probably as big as radical the difference is between having two subsystems and having more than two. This gap is also present between finite- and infinite-dimensional systems. These latter have been analyzed mainly in the Gaussian domain (see e.g. \cite{WanHirTomHay07a}). In the following we will concentrate upon finite-dimensional systems.  From a broad mathematical point of view two approaches can be followed to discern whether a given quantum state is entangled or not. On the one hand we can adopt an analytical point of view and try to find operational criteria to settle the question upon the quantum state itself. As prominent instances we can cite the Schmidt decomposition (only valid for pure states of any dimensions) \cite{NieChu00a}, the PPT criterion (valid as a necessary and sufficient condition for $2\otimes 2$ or $2\otimes 3$ and a necessary condition for the rest of bipartite systems) \cite{Wor76a,Per96a,HorHorHor96a}, the range criterion (a necessary condition for all bipartite systems) \cite{Hor97a} and the computable cross norm or realignment criterion (a necessary condition for all bipartite systems) \cite{Rud05a,CheWu03a}, to name a few (see \cite{LewBruCirKraKusSamSanTar00a,PleVir07a,HorHorHorHor09a,GuhTot09a} for an overview from different perspectives). On the other hand, we can adopt a numerical viewpoint, by which the separability problem for a given state is translated into an optimization problem and then an adequate algorithm is chosen to solve it \cite{DohParSpe02a,DohParSpe04a,Per04a}. Several of these approaches can be generalized to multipartite systems. A remarkable position in this view is played by entanglement witnesses \cite{Ter00a,LewKraCirHor00a}, which accept a double role in this division.\\

The numerical techniques can be understood to solve the question, but, in our view, this cannot be completely satisfactory inasmuch as a numerical solution to a second degree algebraic equation is not as satisfactory as its usual analytical solution. Furthermore we believe that such an analytical comprehension of the separability problem may help in related questions such as the distillability problem \cite{BruCirHorHulKraLewSan02a} and possibly open the way to understand entanglement in a much larger scale, in particular in quantum many-body systems (see e.g. \cite{AmiFazOstVed08a}). In this line we find it mandatory to pursue an analytical operational approach valid for any composite system in any dimensions.\\

Admittedly this is too ambitious a goal to cherish a realistic hope to find a global solution in a short term and we must restrict this generality. In this work we present the first steps aiming at finding an analytical operational solution to the bipartite entanglement problem in \emph{any} dimensions. Our proposal rests on the well-known structure of the set  of Hermitian positive semidefinite  matrices, namely, the convex cone structure $\textrm{PSD}_{N_{1}\otimes N_{2}}$ \cite{HilWat87a}, valid for any dimensions $N_{1}$ and $N_{2}$. Our bottom line tries both to be independent of the dimensions and of the bipartite character of the composite system, seeking a plausible generalization to multipartite systems. The price to be paid is that we have not covered all possible quantum states yet (see below). However we believe our results can be pushed forward to encompass the remaining cases and we hope to do this in a near future.\\

Our approach identifies the set of separable matrices $\textrm{SEP}_{N_{1}\otimes N_{2}}$ as a subcone of $\textrm{PSD}_{N_{1}\otimes N_{2}}$ and exploits the cone properties of this to write any unnormalized \footnote{Not necessarily of unit trace.} quantum state $Q\in\textrm{PSD}_{N_{1}\otimes N_{2}}$ as a convex combination $Q=(1-\lambda_{Q}) C_{Q}+\lambda_{Q} E_{Q}$ of a separable matrix $C_{Q}$ with the same rank as $Q$ and a matrix $E_{Q}$ in the boundary of $\textrm{PSD}_{N_{1}\otimes N_{2}}$. There are similar decompositions in the literature \cite{LewSan98a,KarLew01a} but with a different spirit. Our goal is to determine those values of $\lambda_{Q}\in[0,1]$ by which $Q$ is separable and those for which it is entangled. As a result, we will also have an explicit decomposition of $Q$ in terms of separable matrices. Regretfully for the time being we have only partially succeeded for those matrices whose range $\mathcal{S}=R(Q)$ can be written as a tensor product $\mathcal{S}=\mathcal{S}_{1}\otimes\mathcal{S}_{2}$. We also point out how this formalism admits a generalization to the multipartite case through the SLOCC entanglement classification of pure states \cite{DurVidCir00a,PleVir07a,LamLeoSalSol06c}.\\

The paper is organized as follows: in section \ref{ConStrPSD} we review those pertinent cone properties of the set of Hermitian positive semidefinite matrices. In section \ref{GenIde} we give the bottom line of our approach, which is applied in section \ref{FulRanQuaSta} to full-rank quantum states and in section \ref{QuaStaProFac} to quantum states lying in product faces. In section \ref{PlaGenMulSys} plausible arguments to generalize these techniques to the multipartite case are given as well as the natural connection with the SLOCC classification of pure states. We conclude with final remarks in section \ref{Con}.

\section{The cone structure of unnormalized quantum states}
\label{ConStrPSD}
A natural mathematical framework for finite-dimensional quantum states in the light of the separability problem is the cone of Hermitian positive semidefinite matrices. We review very briefly the notion of a cone and the properties of  $\textrm{PSD}_{N_{1}\otimes N_{2}}$, the cone of these matrices. This amounts to relaxing the separability problem to find a decomposition of an unnormalized quantum state $Q\in\textrm{PSD}_{N_{1}\otimes N_{2}}$ as

\begin{equation}
Q=\sum_{i=1}^{M}\lambda_{i}\ket{x_{i}}\bra{x_{i}}\otimes\ket{y_{i}}\bra{y_{i}},\qquad \lambda_i\geq 0\quad\forall i=1,\dots,M.
\end{equation}

Note that the non-negative parameters $\lambda_{i}$ are not restricted to satisfy $\sum_{i=1}^{M}\lambda_{i}=1$, as in the original formulation \cite{Wer89a}. We will very briefly include those notions pertinent to the forthcoming approach in order to render this work as self-contained as possible. The interested reader may consult the references \cite{Bar81a,HilWat87a} for more details.\\

\begin{defin}
Let $\mathcal{V}$ be a real topological vector space. A cone $K$ in $\mathcal{V}$ is a closed subset of $\mathcal{V}$ which satisfies (i) if $\alpha,\beta\geq 0$ and $\mathbf{x},\mathbf{y}\in K$, then $\alpha\mathbf{x}+\beta\mathbf{y}\in K$ and (ii) $K\cap (-K)=\emptyset$.
\end{defin}

The algebraic structure of a cone generalizes in some sense that of a vector space. In this line of thought, the analogous roles of the concepts of vector subspace and linear basis are played by those of \emph{faces} and \emph{extremals}, respectively.

\begin{defin}
A subcone $F\subset K$ is a face of the cone $K$ if $$\mathbf{x}\in K, \mathbf{y}-\mathbf{x}\in K,\textrm{ and }\mathbf{y}\in F\textrm{  imply  }\mathbf{x}\in F.$$
\end{defin}

\noindent Implicit in this definition is the fact that a face is a cone itself. Let $\dim F=\dim(F-F)$ and $\Phi(\mathcal{S})$ denote the least face of $K$ containing the vector subspace $S$.

\begin{defin}
A face $F$ is an extreme ray of the cone $K$ if $\dim F=1$. A vector $\mathbf{x}$ is an extremal if $\dim\Phi(\textrm{span}\{\mathbf{x}\})=1$.
\end{defin}

It will be often useful to identify a cone by its extremals, since any vector in the cone can be expressed as a so-called conic combination of extremals (except the extremals themselves), i.e. $\mathbf{y}=\sum_{i=1}^{n}\lambda_{i}\mathbf{x}_{i}$, where $\lambda_{i}\geq 0$ and $\mathbf{x}_{i}$ are extremals. Note how close is this expression to that of a linear combination in terms of elements of a linear basis. In this sense it is written $K=\textrm{cone}\{\mathbf{x}_{t}\}_{t\in T}$, where all vectors $\mathbf{x}_{t}$ are extremals and the index set $T$ may be non-numerable \footnote{When the set of extremals is finite, the cone is said to be polyhedral \cite{Bar81a}.}.\\

All these concepts apply readily to the set of Hermitian positive semidefinite matrices  acting upon the complex vector space $\mathbb{C}^{N_{1}}\otimes\mathbb{C}^{N_{2}}$. This set is a cone within the real vector space $\mathcal{H}_{N_{1}\otimes N_{2}}$ of Hermitian matrices denoted by $\textrm{PSD}_{N_{1}\otimes N_{2}}$. In particular, the extremals are matrices of the form $\lambda\ket{x}\bra{x}$, the extreme rays are subsets of the form $\{\lambda\ket{x}\bra{x}\}_{\lambda >0}$ and the faces are parametrized by the different vector subspaces belonging to $\mathbb{C}^{N_{1}}\otimes\mathbb{C}^{N_{2}}$ in such a way that given a vector subspace $\mathcal{S}$, then the face $F=F_{\mathcal{S}}$ is the set of all positive matrices (unnormalized quantum states) $Q$ such that $R(Q)\subseteq\mathcal{S}$, where $R(Q)$ denotes the range of $Q$ \cite{HilWat87a}. Note that we can write

\begin{equation}\label{PSDExt}
\textrm{PSD}_{N_{1}\otimes N_{2}}=\textrm{cone}\{\ket{x}\bra{x}\}_{\ket{x}\in\mathbb{C}^{1}\otimes\mathbb{C}^{2}}
\end{equation}

Indeed the spectral decomposition of any Hermitian positive semidefinite matrix is an example of this fact, since $Q=\sum_{i=1}^{N_{1}\cdot N_{2}}\lambda_{i}\ket{x_{i}}\bra{x_{i}}$, where $\lambda_{i}\geq 0$ are the eigenvalues of $Q$ and $\ket{x_{i}}$ the corresponding eigenvectors. Notice also that \eqref{PSDExt} expresses the fact that this kind of decompositions is valid without necessarily having $\ket{x_{i}}\perp\ket{x_{j}}$ for all $i\neq j$. Furthermore the faces, being cones themselves, can also be written as

\begin{equation}
F_{\mathcal{S}}=\textrm{cone}\{\ket{x}\bra{x}\}_{\ket{x}\in\mathcal{S}}.
\end{equation}

Important characteristics  of $\textrm{PSD}_{N_{1}\otimes N_{2}}$ to be exploited later are (see \cite{HilWat87a} and references therein):

\begin{enumerate}
\item The set of non-trivial faces (faces not being the whole cone $\textrm{PSD}_{N_{1}\otimes N_{2}}$ itself) constitutes the topological boundary of $\textrm{PSD}_{N_{1}\otimes N_{2}}$, which we will denote by $\partial\textrm{PSD}_{N_{1}\otimes N_{2}}$ \footnote{The topology in $\mathcal{H}_{N_{1}\otimes N_{2}}$ is the topology induced by the trace scalar product $(Q_{1},Q_{2})=\textrm{tr}(Q_{1}^{\dagger}Q_{2})=\textrm{tr}(Q_{1}Q_{2})$.}.
\item The preceding result is also valid for each face considered as a cone in the relative topology.
\item $\textrm{PSD}_{N_{1}\otimes N_{2}}$ and any of its faces are convex.
\item $\textrm{PSD}_{N_{1}\otimes N_{2}}$ is connected  and its faces are also connected in their relative topology.
\item Since every face $F=F_{\mathcal{S}}$ is characterized by a vector subspace $\mathcal{S}$ we can attach to each face an integer, namely $r=\dim\mathcal{S}$; this is indeed the rank of the greatest orthogonal projector $P_{\mathcal{S}}=P_{\mathcal{S}}^{\dagger}=P_{\mathcal{S}}^{2}$ belonging to $F_{\mathcal{S}}$, i.e. $r=r_{P_{\mathcal{S}}}=\dim\textrm{R}(P_{\mathcal{S}})$.
\end{enumerate}

As an example of the advantages of this approach, we offer a one-line proof that the cone of separable matrices has no facet \footnote{A facet of a cone of dimension $n$ is a face of dimension $n-1$.} \cite{GuhLut07a}: any face $F=F_{\mathcal{S}_{k}}$ ($\dim\mathcal{S}_{k}=k$) of $\textrm{PSD}_{N}$ has dimension $k^{2}$ \cite{HilWat87a}, thus any subcone of $\textrm{PSD}_{N_{1}\otimes N_{2}}$ has no facet, let alone the cone $\textrm{SEP}_{N_{1}\otimes N_{2}}$.

\section{The general idea}
\label{GenIde}
The proposal included herein rests upon the two following independent results. We will denote by $\textrm{int}\left(A\right)$ the topological interior of set $A$.

\begin{prop}\label{DecProp}
Let $Q\in\textrm{int}\left(F_{\mathcal{S}_{Q}}\right)$. Then there exists another matrix $C_{Q}\in\textrm{int}\left(F_{\mathcal{S}_{Q}}\right)$ and a matrix $E_{Q}\in\partial F_{\mathcal{S}_{Q}}$ such that

\begin{equation}\label{CEDec}
Q=\textrm{tr}(Q)\left[(1-\lambda_{Q})C_{Q}+\lambda_{Q}E_{Q}\right],
\end{equation}

\noindent where $\lambda_{Q}\in(0,1)$ and $\textrm{tr}(C_{Q})=\textrm{tr}(E_{Q})=1$.
\end{prop}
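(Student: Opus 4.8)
The statement asserts that any $Q$ in the interior of the face $F_{\mathcal{S}_Q}$ (equivalently, any unnormalized quantum state of full rank on its own range) can be split into a convex combination of an interior point $C_Q$ and a boundary point $E_Q$ of the same face. The plan is to construct $E_Q$ as a suitable rank-deficient matrix supported on $\mathcal{S}_Q$, and then recover $C_Q$ by algebraically solving \eqref{CEDec} for it; the whole burden is to verify that the resulting $C_Q$ indeed lands in the interior of $F_{\mathcal{S}_Q}$ for an appropriate $\lambda_Q$.

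First I would normalize and work inside the face $F_{\mathcal{S}_Q}$, which by property 2 of the list in section \ref{ConStrPSD} is itself a cone carrying its own relative topology; its interior consists exactly of those $Q$ with $R(Q)=\mathcal{S}_Q$, i.e.\ of full rank $r=\dim\mathcal{S}_Q$ when viewed as operators on $\mathcal{S}_Q$, and its boundary consists of the strictly rank-deficient ones. So I need $E_Q\in F_{\mathcal{S}_Q}$ with $r_{E_Q}<r$ and $\mathrm{tr}(E_Q)=1$. The cleanest choice: pick any unit vector $\ket{v}\in\mathcal{S}_Q$ and set $E_Q=\ket{v}\bra{v}$, which lies on $\partial F_{\mathcal{S}_Q}$. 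Then define
\begin{equation}
C_Q=\frac{1}{1-\lambda_Q}\left[\frac{Q}{\mathrm{tr}(Q)}-\lambda_Q E_Q\right],
\end{equation}
so that \eqref{CEDec} holds by construction, and $\mathrm{tr}(C_Q)=\frac{1}{1-\lambda_Q}(1-\lambda_Q)=1$ automatically. It remains to choose $\lambda_Q\in(0,1)$ so that $C_Q\in\mathrm{int}(F_{\mathcal{S}_Q})$, i.e.\ $C_Q$ is positive semidefinite and has range exactly $\mathcal{S}_Q$.

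The key step is this positivity/range check, and it is where convexity and openness of the interior do the work. Since $Q/\mathrm{tr}(Q)\in\mathrm{int}(F_{\mathcal{S}_Q})$ and $\mathrm{int}(F_{\mathcal{S}_Q})$ is a (relatively) open convex set containing a small ball around $Q/\mathrm{tr}(Q)$ in the subspace of Hermitian operators on $\mathcal{S}_Q$, the segment from $Q/\mathrm{tr}(Q)$ \emph{away} from $E_Q$ stays in that interior for small enough displacement: explicitly, $C_Q=\frac{Q}{\mathrm{tr}(Q)}+\frac{\lambda_Q}{1-\lambda_Q}\bigl(\frac{Q}{\mathrm{tr}(Q)}-E_Q\bigr)$, which for $\lambda_Q$ small is a small perturbation of the interior point $Q/\mathrm{tr}(Q)$ in the direction of the fixed vector $\frac{Q}{\mathrm{tr}(Q)}-E_Q\in \mathcal{H}_{N_1\otimes N_2}$ restricted to $\mathcal{S}_Q$; hence there is $\lambda_Q>0$ with $C_Q\in\mathrm{int}(F_{\mathcal{S}_Q})$. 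Concretely one can take $\lambda_Q$ below $\mu_{\min}(Q)/(\mathrm{tr}(Q)+\mu_{\min}(Q))$ where $\mu_{\min}(Q)$ is the smallest nonzero eigenvalue of $Q$: then $\frac{Q}{\mathrm{tr}(Q)}-\lambda_Q E_Q$ is still strictly positive on $\mathcal{S}_Q$, so dividing by $1-\lambda_Q>0$ preserves that, giving $C_Q\in\mathrm{int}(F_{\mathcal{S}_Q})$ and $E_Q\in\partial F_{\mathcal{S}_Q}$ as required.

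The main obstacle is purely one of bookkeeping rather than depth: one must be careful that "interior'' and "boundary'' are always meant \emph{relative} to the face $F_{\mathcal{S}_Q}$ and not to the ambient cone $\textrm{PSD}_{N_1\otimes N_2}$ (otherwise the claim is false whenever $\mathcal{S}_Q\neq\mathbb{C}^{N_1}\otimes\mathbb{C}^{N_2}$), and that the perturbation argument is carried out inside the correct linear space, namely the Hermitian operators supported on $\mathcal{S}_Q$, whose dimension is $r^2$. Once that is fixed, the eigenvalue bound on $\lambda_Q$ makes everything explicit. Note the proposition as stated leaves $C_Q$ and $E_Q$ highly non-unique — it only asserts existence — so no optimization over the choice is needed here; the sharper choices $C_Q=M_1\otimes M_2$ promised in the abstract are taken up in the later sections.
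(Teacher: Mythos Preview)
Your proof is correct, but it runs in the opposite direction from the paper's. The paper first chooses $C_Q$ as an arbitrary unit-trace point in a relative neighborhood of $Q/\mathrm{tr}(Q)$ inside $\mathrm{int}(F_{\mathcal{S}_Q})\cap A_1$, then extends the segment from $C_Q$ through $Q$ until it first meets $\partial F_{\mathcal{S}_Q}$, calling that point $E_Q$; convexity then gives $\lambda_Q\in(0,1)$. You instead fix $E_Q=\ket{v}\bra{v}\in\partial F_{\mathcal{S}_Q}$ first and solve for $C_Q$, using openness of the relative interior (made quantitative via the smallest nonzero eigenvalue of $Q$) to guarantee $C_Q\in\mathrm{int}(F_{\mathcal{S}_Q})$ for small $\lambda_Q$. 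Both arguments rest on the same fact---relative openness of $\mathrm{int}(F_{\mathcal{S}_Q})$---and both are valid existence proofs. Your version has the minor advantage of an explicit eigenvalue bound on $\lambda_Q$; the paper's version has the structural advantage that it leaves $C_Q$ completely free (in particular, free to be chosen as $M_1\otimes M_2$ later) and produces $E_Q$ by the rank-drop construction $\mu\mapsto(1-\mu)C_Q+\mu Q$ that is actually used in the subsequent sections. Your remark that interior and boundary must be read relative to $F_{\mathcal{S}_Q}$ is well taken and matches the paper's setup.
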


\begin{proof}
With no loss of generality we can suppose that $\textrm{tr}(Q)=1$ (otherwise repeat the forthcoming argument for $\frac{Q}{\textrm{tr}(Q)}$). Let $A_{1}$ denote the affine subspace of unit-trace Hermitian positive semidefinite matrices. Since $Q\in\left(\textrm{int}\left(F_{\mathcal{S}_{Q}}\right)\right)\cap A_{1}$ there always exists an open neighborhood of $Q$ contained in $\left(\textrm{int}\left(F_{\mathcal{S}_{Q}}\right)\right)\cap A_{1}$. Take $C_{Q}$ as any of the matrices belonging to this neighborhood. Since $\left(F_{\mathcal{S}_{Q}}\right)\cap A_{1}$ is connected it is possible to construct the closed segment $(1-\mu)C_{Q}+\mu Q$, with $\mu\in[0,1]$, lying inside $\left(\textrm{int}\left(F_{\mathcal{S}_{Q}}\right)\right)\cap A_{1}$ for all $\mu$ and to extend the segment ($\mu>1$) until $(1-\mu)C_{Q}+\mu Q\ \in\left(\partial F_{\mathcal{S}_{Q}}\right)\cap A_{1}$. Let $\mu^{*}$ be the minimal $\mu>1$ such that $(1-\mu)C_{Q}+\mu Q\ \in\left(\partial F_{\mathcal{S}_{Q}}\right)\cap A_{1}$. Then $E_{Q}=(1-\mu^{*})C_{Q}+\mu^{*} Q\ \in\left(\partial F_{\mathcal{S}_{Q}}\right)\cap A_{1}$ and, given the convexity of $F_{\mathcal{S}_{Q}}\cap A_{1}$ there always exist a real $\lambda\in(0,1)$ such $Q=(1-\lambda)C_{Q}+\lambda E_{Q}$.
\end{proof}

Note that decomposition \eqref{CEDec} is similar to others pursued in different approaches \cite{LewSan98a,KarLew01a}. The resemblance arises, in our view, from the convex cone structure of the set of quantum states; the difference stems from the diverse alternative we have when choosing either $C_{Q}$ or $E_{Q}$. This decomposition is rather easy to compute from a theoretical point of view: find the minimal $\mu>1$ such that $r_{(1-\mu)C_{Q}+\mu Q}< r_{Q}$.\\

The second result we need is the following

\begin{prop}\label{SepProp}
The cone of separable matrices $\textrm{SEP}_{N_{1}\otimes N_{2}}$ is a proper subcone of $\textrm{PSD}_{N_{1}\otimes N_{2}}$.
\end{prop}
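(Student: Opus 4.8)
The statement has two parts: that $\textrm{SEP}_{N_1\otimes N_2}$ is a subcone of $\textrm{PSD}_{N_1\otimes N_2}$, and that this inclusion is proper. The first part is essentially a matter of unwinding definitions. The plan is to verify directly that $\textrm{SEP}_{N_1\otimes N_2}$ satisfies the two cone axioms from Definition 2.1. Closure under conic combinations is immediate: if $Q_1=\sum_i \lambda_i \ket{x_i}\bra{x_i}\otimes\ket{y_i}\bra{y_i}$ and $Q_2=\sum_j \mu_j \ket{u_j}\bra{u_j}\otimes\ket{v_j}\bra{v_j}$ with all coefficients non-negative, then for $\alpha,\beta\geq 0$ the matrix $\alpha Q_1+\beta Q_2$ is again a non-negative combination of product projectors, hence separable. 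Property (ii), $K\cap(-K)=\emptyset$, follows because every nonzero separable matrix is a nonzero positive semidefinite matrix, and $\textrm{PSD}_{N_1\otimes N_2}$ itself satisfies (ii); since $\textrm{SEP}_{N_1\otimes N_2}\subseteq\textrm{PSD}_{N_1\otimes N_2}$, the same holds for the smaller set. One should also note that $\textrm{SEP}_{N_1\otimes N_2}$ is closed (it is the closed conic hull of the compact set of product projectors $\{\ket{x}\bra{x}\otimes\ket{y}\bra{y}\}$ with unit-norm vectors, and a standard Carath\'eodory-type argument bounds the number of terms needed, so the hull is already closed without taking closure), which is needed for it to qualify as a cone in the sense of Definition 2.1.

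The substantive part is properness: $\textrm{SEP}_{N_1\otimes N_2}\subsetneq\textrm{PSD}_{N_1\otimes N_2}$. Here I would simply exhibit an entangled state, i.e.\ a positive semidefinite matrix that is not separable. The cleanest choice is a maximally entangled pure state: take $\ket{\Psi}=\frac{1}{\sqrt{d}}\sum_{k=1}^{d}\ket{k}\otimes\ket{k}$ with $d=\min(N_1,N_2)$, and set $Q_\Psi=\ket{\Psi}\bra{\Psi}\in\textrm{PSD}_{N_1\otimes N_2}$. To see $Q_\Psi\notin\textrm{SEP}_{N_1\otimes N_2}$, suppose for contradiction that $Q_\Psi=\sum_i \lambda_i \ket{x_i}\bra{x_i}\otimes\ket{y_i}\bra{y_i}$ with $\lambda_i>0$. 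Since $Q_\Psi$ is a rank-one projector, every term in the sum must be proportional to $Q_\Psi$ itself, forcing each product vector $\ket{x_i}\otimes\ket{y_i}$ to be proportional to $\ket{\Psi}$; but $\ket{\Psi}$ has Schmidt rank $d\geq 2$, so it is not a product vector — contradiction. (Alternatively, one may invoke the PPT criterion: the partial transpose of $Q_\Psi$ has a negative eigenvalue, whereas any separable matrix stays positive under partial transposition.) This shows the inclusion is strict.

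I do not anticipate a genuine obstacle here; the proposition is foundational rather than deep, and both halves reduce to textbook facts. The only point requiring a word of care is the closedness of $\textrm{SEP}_{N_1\otimes N_2}$ — it is not a polyhedral cone (the set of extremal product projectors is a continuum, not finite), so one cannot appeal to finiteness; instead one uses compactness of the set of normalized product projectors together with the Carath\'eodory bound $M\leq (N_1 N_2)^2$ on the number of terms, which makes the conic hull the continuous image of a compact set and hence closed. With that observation in place the verification of Definition 2.1 is routine, and the explicit entangled example settles properness.
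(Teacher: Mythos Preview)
Your proposal is correct and, at its core, rests on the same observation as the paper's one-line proof: the extremals of $\textrm{SEP}_{N_1\otimes N_2}$ (product rank-one projectors) form a strict subset of the extremals of $\textrm{PSD}_{N_1\otimes N_2}$ (all rank-one projectors), so any entangled pure state witnesses the proper inclusion. The paper simply states this containment of extremal sets and stops; you instead unpack it by exhibiting a specific maximally entangled $\ket{\Psi}$ and arguing via Schmidt rank (or PPT) that it cannot be a conic combination of product projectors. The two routes are equivalent in substance. Where your treatment genuinely adds something is the explicit verification of closedness of $\textrm{SEP}_{N_1\otimes N_2}$ via the Carath\'eodory bound and compactness of the normalized product projectors --- the paper's proof silently assumes this, so your version is the more complete of the two.
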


\begin{proof}
This elementarily follows from the fact that the set of extremals $\{\ket{x}\bra{x}\otimes\ket{y}\bra{y}\}_{\ket{x}\in\mathbb{C}^{N_{1}},\ket{y}\in\mathbb{C}^{N_{2}}}$ of $\textrm{SEP}_{N_{1}\otimes N_{2}}$ is strictly contained in the set of extremals $\{\ket{z}\}_{\ket{z}\in\mathbb{C}^{N_{1}}\otimes\mathbb{C}^{N_{2}}}$ of $\textrm{SEP}_{N_{1}\otimes N_{2}}$.
\end{proof}

The general idea of this approach lies on the choice of $C_{Q}$ in the decomposition \eqref{CEDec} in the light of proposition \ref{SepProp}:

\begin{crit}
Given a quantum state $\rho$ choose $C_{\rho}\in\textrm{SEP}_{N_{1}\otimes N_{2}}$ and let $\lambda_{\rho}^{*}$ be the greatest $\lambda_{\rho}$ such that $(1-\lambda_{\rho}^{*})C_{\rho}+\lambda_{\rho}^{*}E_{\rho}\in\textrm{SEP}_{N_{1}\otimes N_{2}}$. Then $\rho$ will be separable if and only if $\lambda_{\rho}\leq\lambda_{\rho}^{*}$.
\end{crit}

\begin{proof}
The result follows readily from the convexity of both $\textrm{SEP}_{N_{1}\otimes N_{2}}$ and $\textrm{PSD}_{N_{1}\otimes N_{2}}$ and from the fact that $\textrm{SEP}_{N_{1}\otimes N_{2}}\subset\textrm{PSD}_{N_{1}\otimes N_{2}}$. Since both $C_{\rho}$ and $(1-\lambda_{\rho}^{*})C_{\rho}+\lambda_{\rho}^{*}E_{\rho}$ are separable, then $(1-\lambda_{\rho})C_{\rho}+\lambda_{\rho}E_{\rho}$ is separable whenever $\lambda_{\rho}\in[0,\lambda_{\rho}^{*}]$, and for $\lambda_{\rho}>\lambda_{\rho}^{*}$, by construction, $(1-\lambda_{\rho})C_{\rho}+\lambda_{\rho}E_{\rho}$ is not separable.
\end{proof}

Note that given a quantum state $\rho$ this separability criterion is reduced to the face $F_{\mathcal{S}_{\rho}}$, thus not involving quantum states belonging to the rest of the cone. Although not indicated in the notation, the $\lambda^{*}_{\rho}$ values depend also on $C_{\rho}$ and $E_{\rho}$. Notice also that once the values $\lambda^{*}_{\rho}$ are determined, the separability check consists of a simple comparison between the value $\lambda_{\rho}$ in the decomposition \eqref{CEDec} and the corresponding $\lambda^{*}_{\rho}$ value. In the subsequent sections we will concentrate upon the computation of these values.

\section{Full-rank quantum states}
\label{FulRanQuaSta}

Firstly we will concentrate upon finding the $\lambda^{*}_{\rho}$ values for those quantum states $\rho$ having full rank. Following proposition \ref{DecProp} we  have always the freedom to choose either $C_{\rho}$ or $E_{\rho}$. Our choice is $C_{\rho}=M_1\otimes M_2$, where $M_i\in\textrm{int\ PSD}_{N_{i}}$, that is, $r_{M_{i}}=N_{i}$ for $i=1,2$. The resulting $E_{\rho}$ in decomposition \eqref{CEDec} is not determined a priori, so  it could be any matrix with rank strictly lower that $N_{1}\cdot N_{2}$. We will distinguish between the cases in which $r_{E_{\rho}}=1$ and those in which $r_{E_{\rho}}>1$.\\

We need previously some notation and some preliminary results. Given a pure quantum state $\ket{z}\in\mathbb{C}^{N_{1}}\otimes\mathbb{C}^{N_{2}}$, we define the $(N_{1}\times N_{2})$-dimensional matrix of coefficients in the computational basis $\ket{z}=\sum_{i_{1}i_{2}}c_{i_{1}i_{2}}\ket{i_{1} i_{2}}$ by $C(z)=[c_{i_{1}i_{2}}]$, where $i_{1}$ is a row column and $i_{2}$ a column index. Then we can apply any generalized singular value decomposition (GSVD) to $C(z)$ so that $C(z)=F(z)D(z)G^{\dagger}(z)$, where $F(z)$ is an $N_{1}$-dimensional square nonsingular matrix, $D(z)$ is a $(N_{1}\times N_{2})-$dimensional diagonal matrix with nonnull entries $\sigma_{0}(z)\geq\dots\sigma_ {r-1}(z)>0$ and $G(z)$ is an $N_{2}$-dimensional square nonsingular matrix (see appendix \ref{GSVD} for an overview of the GSVD). For ease of notation we do not specify the particular GSVD yielding the generalized singular values $\sigma_{i}(z)$, this being clear from the context. When the proper SVD is indeed applied, the singular values are in fact the Schmidt coefficients of $\ket{z}$, thus we suggest to name them generalized Schmidt coefficients. Notice that $\sum_{i=0}^{r-1}\sigma_{i}^{2}(z)=1$, since the pure quantum state $\ket{z}$ is normalized. The number $r$ of non-null generalized Schmidt coefficients is the rank of $C(z)$. Note that the Schmidt criterion to decide whether a given pure quantum state is entangled or not is readily contained in this formulation, since $\ket{z}$ is separable if and only if $r_{C(z)}=1$, which is equivalent to $\sigma_{1}(z)=0$ \cite{LamLeoSalSol06c}.\\

In matrix notation this all is elementarily formulated as follows:

\begin{prop}
Let $\ket{z}\bra{z}\in\textrm{PSD}_{N_{1}\otimes N_{2}}$. Then
\begin{enumerate}
\item $\ket{z}\bra{z}=\left(F(z)\otimes G^{*}(z)\right)\ket{\sigma}\bra{\sigma}\left(F^{\dagger}(z)\otimes G^{t}(z)\right)$, where $F(z)$ ($G(z)$) is the left (right) matrix of the GSVD of $C(z)$ and $\ket{\sigma}=\sum_{i=0}^{r-1}\sigma_{i}(z)\ket{ii}$.
\item $\ket{z}\bra{z}\in\textrm{SEP}_{N_{1}\otimes N_{2}}$ if, and only if, $\sigma_{1}(z)=0$.
\end{enumerate}
\end{prop}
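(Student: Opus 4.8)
The plan is to derive both claims from the single elementary observation that, in the computational basis, applying a product operator $A\otimes B$ to a bipartite vector amounts to the bilinear transformation $W\mapsto AWB^{t}$ of its matrix of coefficients. So the first step is to record the identity: if $\ket{w}=\sum_{jk}w_{jk}\ket{jk}$ has coefficient matrix $W=[w_{jk}]$, then $(A\otimes B)\ket{w}$ has coefficient matrix $AWB^{t}$ (for any $N_{1}$-dimensional square $A$ and $N_{2}$-dimensional square $B$). This is one line: expand $A\ket{j}=\sum_{i}A_{ij}\ket{i}$ and $B\ket{k}=\sum_{l}B_{lk}\ket{l}$ and collect coefficients.

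For item 1, I would apply this with $A=F(z)$, $B=G^{*}(z)$, and $\ket{w}=\ket{\sigma}$, whose coefficient matrix is precisely the diagonal matrix $D(z)$. The resulting coefficient matrix is $F(z)\,D(z)\,(G^{*}(z))^{t}=F(z)D(z)G^{\dagger}(z)=C(z)$, and since a bipartite vector is fixed by its coefficient matrix, $(F(z)\otimes G^{*}(z))\ket{\sigma}=\ket{z}$. Taking outer products and using $(G^{*}(z))^{\dagger}=G^{t}(z)$ then yields the stated formula $\ket{z}\bra{z}=\left(F(z)\otimes G^{*}(z)\right)\ket{\sigma}\bra{\sigma}\left(F^{\dagger}(z)\otimes G^{t}(z)\right)$; the only care needed is keeping track of conjugates and transposes in passing from $\ket{z}$ to $\bra{z}$.

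For item 2, I would first show that a rank-one state $\ket{z}\bra{z}$ belongs to $\textrm{SEP}_{N_{1}\otimes N_{2}}$ if and only if $\ket{z}$ is a product vector. The converse is immediate, since then $\ket{z}\bra{z}$ is itself an extremal of $\textrm{SEP}_{N_{1}\otimes N_{2}}$. For the direct implication, write $\ket{z}\bra{z}=\sum_{i}\lambda_{i}\ket{x_{i}}\bra{x_{i}}\otimes\ket{y_{i}}\bra{y_{i}}$ with $\lambda_{i}>0$; since each summand is positive semidefinite and the left-hand side has one-dimensional range $\mathbb{C}\ket{z}$, every $\ket{x_{i}}\otimes\ket{y_{i}}$ must lie in $\mathbb{C}\ket{z}$, forcing $\ket{z}=\ket{x}\otimes\ket{y}$ for some $\ket{x},\ket{y}$. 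Then I would invoke the Schmidt criterion already quoted in the text: $\ket{z}$ is a product vector exactly when $r_{C(z)}=1$; and since $F(z)$ and $G(z)$ are nonsingular, $r_{C(z)}=r_{D(z)}$, which equals $1$ if and only if $\sigma_{1}(z)=0$ (the generalized Schmidt coefficients being ordered non-increasingly, with $\sigma_{0}(z)>0$ because $\ket{z}\neq 0$).

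The only genuinely substantive point — the main obstacle, modest as it is — is the direct implication in item 2: one must argue that separability of the rank-one matrix forces the decomposing product vectors to be collinear with $\ket{z}$, not merely to span a space containing it. This rests on the standard fact that in a sum of positive semidefinite matrices the range of each summand is contained in the range of the sum. Everything else reduces to the linear-algebra identity of the first paragraph together with the invariance of matrix rank under multiplication by nonsingular factors.
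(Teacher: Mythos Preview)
Your proof is correct and supplies precisely the details the paper omits: the paper states this proposition without proof, treating it as an elementary reformulation of the GSVD and the Schmidt criterion (citing \cite{LamLeoSalSol06c} for the equivalence $r_{C(z)}=1\Leftrightarrow\sigma_{1}(z)=0$). Your vectorization identity $W\mapsto AWB^{t}$ and the range argument for the rank-one separable case are exactly the standard ingredients the paper is implicitly invoking.
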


\subsection{Full-rank states with $r_{E_{\rho}}=1$}

We will find those nonnegative $\lambda$ such that $\rho=(1-\lambda)M_1\otimes M_2+\lambda\ket{z}\bra{z}\in\textrm{SEP}_{N_{1}\otimes N_{2}}$, i.e. we will compute $\lambda^{*}_{\rho}$. Since $r_{\rho}=N_{1}\cdot N_{2}$, $M_{1}$ and $M_{2}$ are also full-rank matrices, hence there always exist nonsingular matrices $F_{1}$ and $F_{2}$ such that $M_{i}=F_{i}F_{i}^{\dagger}$ \cite{HorJoh91a}. We have the following partial result:

\begin{prop}
Let $\rho=(1-\lambda)M_1\otimes M_2+\lambda\ket{z}\bra{z}$ with $M_{i}=F_{i}F_{i}^{\dagger}$. Then $\rho\in\textrm{SEP}_{N_{1}\otimes N_{2}}$ if, and only if, $\bar{\rho}\equiv(1-\lambda)\frac{\mathbb{I}_{N_{1}N_{2}}}{N_{1}N_{2}}+\lambda\ket{\sigma}\bra{\sigma}\in\textrm{SEP}_{N_{1}\otimes N_{2}}$, where $\ket{\sigma}=\sum_{i=0}^{r-1}\sigma_{i}(z)\ket{ii}$ is built with the generalized Schmidt coefficients of $C(z)$ with respect to $(M^{-1}_{1},M^{-1}_{2})$.
\end{prop}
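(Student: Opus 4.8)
The plan is to reduce the separability question for $\rho$ to that of a ``canonical'' state $\bar\rho$ by applying an invertible local (i.e. product) transformation, and then exploit the fact that separability is invariant under such maps. Concretely, write $M_i = F_i F_i^\dagger$ with $F_i$ nonsingular (Cholesky-type factorization, available since $M_i$ is full rank positive definite), and consider the local filtering operator $L = F_1^{-1}\otimes F_2^{-1}$. Then $L\,(M_1\otimes M_2)\,L^\dagger = (F_1^{-1}M_1 F_1^{-\dagger})\otimes(F_2^{-1}M_2 F_2^{-\dagger}) = \mathbb{I}_{N_1}\otimes\mathbb{I}_{N_2} = \mathbb{I}_{N_1 N_2}$, so conjugation by $L$ sends the product part of $\rho$ to the (unnormalized) identity. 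For the rank-one part, $L\,\ket{z}\bra{z}\,L^\dagger = \ket{w}\bra{w}$ where $\ket{w} = (F_1^{-1}\otimes F_2^{-1})\ket{z}$, so $C(w) = F_1^{-1}C(z)(F_2^{-1})^t = F_1^{-1}C(z)(F_2^t)^{-1}$. The key observation is that this is exactly a GSVD-type factorization of $C(z)$: comparing with $C(z) = F(z)D(z)G^\dagger(z)$, if we \emph{choose} the GSVD relative to the inner products defined by $M_1^{-1}$ and $M_2^{-1}$ then the ``diagonal core'' is precisely $D(z)$ built from the generalized Schmidt coefficients $\sigma_i(z)$ with respect to $(M_1^{-1},M_2^{-1})$, i.e. $\ket{w}$ is proportional to $\ket{\sigma} = \sum_{i=0}^{r-1}\sigma_i(z)\ket{ii}$.

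First I would make the normalization bookkeeping explicit: $L\rho L^\dagger = (1-\lambda)\,\mathbb{I}_{N_1 N_2} + \lambda\,\ket{w}\bra{w}$ up to the constants from the GSVD. I would then argue that $\ket{w}\bra{w}$ and $\ket{\sigma}\bra{\sigma}$ differ only by an overall positive scalar and possibly by a further \emph{local unitary} (the $F(z)$ and $G(z)$ factors of the chosen GSVD, once the filtering by $F_i^{-1}$ has been divided out, reduce to unitaries when the GSVD is taken with respect to the $M_i^{-1}$ metrics — this is precisely what distinguishes a GSVD relative to a weighted inner product from an arbitrary one). Rescaling $L\rho L^\dagger$ to unit trace and absorbing the local unitary, one lands exactly on $\bar\rho = (1-\lambda)\frac{\mathbb{I}_{N_1 N_2}}{N_1 N_2} + \lambda\,\ket{\sigma}\bra{\sigma}$, bearing in mind that the coefficient $\lambda$ is unchanged because both terms get rescaled by the same factor once we pass to normalized states and the weight $\lambda$ in \eqref{CEDec} is the convex-combination parameter, not an absolute scale.

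Second I would invoke the standard fact that invertible local operations preserve separability in both directions: if $L = A\otimes B$ with $A,B$ invertible, then $\sigma$ is separable iff $L\sigma L^\dagger$ is separable, because $L$ maps the extremals $\ket{x}\bra{x}\otimes\ket{y}\bra{y}$ onto (unnormalized) extremals of the same product form and $L^{-1} = A^{-1}\otimes B^{-1}$ does the converse; local unitaries and positive rescalings are special cases. Applying this to $L = F_1^{-1}\otimes F_2^{-1}$ (and the subsequent local unitary and rescaling) gives $\rho\in\textrm{SEP}_{N_1\otimes N_2}$ iff $\bar\rho\in\textrm{SEP}_{N_1\otimes N_2}$, which is the claim.

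The main obstacle, and the step deserving the most care, is the claim that the factorization $C(w) = F_1^{-1}C(z)(F_2^t)^{-1}$ \emph{is} a GSVD of $C(z)$ relative to $(M_1^{-1},M_2^{-1})$ with diagonal part exactly $\ket{\sigma}$ — i.e. pinning down which generalized inner products make the leftover $F_1^{-1}F(z)$ and $G^\dagger(z)F_2^{-t}$ factors unitary, so that they can be stripped off as local unitaries without disturbing the identity term (the identity is invariant under \emph{any} local unitaries, so this is consistent). This requires the definition of the GSVD with respect to weighted inner products from appendix \ref{GSVD} and a short computation showing the orthogonality relations $F(z)^\dagger M_1^{-1} F(z) = \mathbb{I}$ and $G(z)^\dagger M_2^{-1} G(z) = \mathbb{I}$ are exactly what is needed. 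Everything else — the Cholesky factorization of $M_i$, the behavior of $C(\cdot)$ under $A\otimes B$, and the invariance of separability under invertible local maps — is routine.
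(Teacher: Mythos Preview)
The paper does not give an explicit proof of this proposition; it is presented as an immediate consequence of the GSVD factorization recorded in Proposition~4.1 together with the invariance of separability under invertible local transformations (the same mechanism is spelled out in the proof of Result~3). Your approach --- conjugate by $F_1^{-1}\otimes F_2^{-1}$, strip off the residual local unitaries coming from the GSVD, and invoke the two--sided preservation of separability under invertible product maps --- is exactly this argument, and you have correctly singled out the GSVD bookkeeping (verifying that $F_1^{-1}F(z)$ and the corresponding second factor are genuinely unitary for the $(M_1^{-1},M_2^{-1})$ metrics) as the one place requiring care.

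One point you should tighten: the justification that ``$\lambda$ is unchanged because both terms get rescaled by the same factor'' is not correct as written. After filtering, the product part becomes $(1-\lambda)\,\mathbb{I}_{N_1N_2}$ while the rank-one part becomes $\lambda\,\ket{w}\bra{w}$ with $\|w\|^2=\bra{z}(M_1^{-1}\otimes M_2^{-1})\ket{z}$; renormalizing to unit trace then rescales the two terms by $N_1N_2$ and $\|w\|^2$ respectively, which are not equal in general. Whether the convex parameter really survives unchanged therefore depends on the precise normalization convention adopted for the generalized Schmidt coefficients $\sigma_i(z)$ --- a point on which the paper itself is loose (compare the displayed identity in the proof of Result~3, where the same mismatch of factors is silently absorbed). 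The separability \emph{equivalence} is unaffected either way, since scalars and invertible local maps preserve it; only the identification of $\bar\rho$ with the \emph{same} $\lambda$ hinges on that convention, so you should state explicitly which normalization of the $\sigma_i(z)$ you are using and check it against $\|w\|^2$.
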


The first key result is the following necessary and sufficient separability criterion for $\bar{\rho}$:

\begin{prop}\label{ExtPro}
Let $\bar{\rho}\equiv(1-\lambda)\frac{\mathbb{I}_{N_{1}N_{2}}}{N_{1}N_{2}}+\lambda\ket{\sigma}\bra{\sigma}$. Then $\bar{\rho}\in\textrm{SEP}_{N_{1}\otimes N_{2}}$ if, and only if, $\lambda\leq\lambda^{*}(z)\equiv\frac{1}{1+N_{1}N_{2}\sigma_{0}(z)\sigma_{1}(z)}$.
\end{prop}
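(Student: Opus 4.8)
The plan is to reduce the separability question for $\bar\rho$ to a one-parameter family of pure-state-plus-noise states living in the two-dimensional subspace spanned by $\ket{00}$ and $\ket{11}$, and to pin down the threshold by a direct decomposition argument in one direction and a witness-type obstruction in the other. First I would observe that $\ket{\sigma}=\sum_{i=0}^{r-1}\sigma_i(z)\ket{ii}$ has generalized Schmidt rank $r$, and that whether $\bar\rho$ is separable cannot depend on the full spectrum of $\ket{\sigma}$ in a complicated way because the maximally mixed part is symmetric; the only quantities that survive are the largest two generalized Schmidt coefficients $\sigma_0(z)\geq\sigma_1(z)$. So the first step is to argue that the worst case — the smallest admissible $\lambda$ — is governed by the $\sigma_0,\sigma_1$ pair, i.e. by projecting onto $\mathrm{span}\{\ket{0},\ket{1}\}\otimes\mathrm{span}\{\ket{0},\ket{1}\}$ (on both factors) one reduces to an effective $2\otimes 2$ problem, where PPT is necessary and sufficient.

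For the \emph{sufficiency} direction ($\lambda\leq\lambda^*(z)$ implies separability) I would exhibit an explicit separable decomposition of $\bar\rho$. The natural building blocks are the product states $\ket{e_k\otimes f_k}$ obtained from the eigenvectors of a suitably chosen $2\times 2$ block, together with a uniform "background'' of $\ket{i}\bra{i}\otimes\ket{j}\bra{j}$ over all computational basis pairs $(i,j)$, which assembles exactly into $\frac{\mathbb I}{N_1N_2}$-type terms; the remaining rank-two piece $\sigma_0\sigma_1(\ket{00}\bra{11}+\ket{11}\bra{00})$ plus the diagonal leftovers must be written as a convex combination of $\ket{\phi_\theta}\bra{\phi_\theta}\otimes\ket{\phi_\theta}\bra{\phi_\theta}$ with $\ket{\phi_\theta}=\cos\theta\ket{0}+e^{i\alpha}\sin\theta\ket{1}$, averaging $\alpha$ over a circle to kill the off-diagonal-in-$(01)$ junk and keep only the $\ket{00}$–$\ket{11}$ coherence. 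Matching coefficients gives precisely the condition $\lambda N_1N_2\sigma_0\sigma_1\leq 1-\lambda$, i.e. $\lambda\leq\lambda^*(z)$.

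For the \emph{necessity} direction I would take the partial transpose of $\bar\rho$ (with respect to, say, the second subsystem) and locate the $2\times 2$ block indexed by $\{\ket{01},\ket{10}\}$: on the diagonal it carries $\frac{1-\lambda}{N_1N_2}$ in each slot, and the partial transpose moves the coherence $\lambda\sigma_0\sigma_1$ into the off-diagonal of exactly this block. Positivity of that $2\times2$ minor forces $\bigl(\tfrac{1-\lambda}{N_1N_2}\bigr)^2\geq(\lambda\sigma_0\sigma_1)^2$, which rearranges to $\lambda\leq\lambda^*(z)$; since PPT is a necessary condition for separability in all dimensions, $\lambda>\lambda^*(z)$ rules out separability. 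The main obstacle I anticipate is the sufficiency side: one must be careful that the proposed product-state ensemble really reconstructs $\frac{\mathbb I_{N_1N_2}}{N_1N_2}$ on \emph{all} of $\mathbb C^{N_1}\otimes\mathbb C^{N_2}$ (not just on the $2\otimes 2$ corner) while simultaneously producing the single coherence $\sigma_0\sigma_1$ and no spurious $\sigma_i\sigma_j$ cross terms for $i,j\geq 1$; handling those extra coherences among $\ket{ii}$, $\ket{jj}$ is what makes $\sigma_0\sigma_1$ (the product of the two \emph{largest}) the binding constraint, and getting that bookkeeping exactly right — rather than a weaker bound like $\sum_{i<j}\sigma_i\sigma_j$ — is the delicate point.
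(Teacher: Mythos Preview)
Your necessity argument is essentially the paper's: compute $\bar\rho^{T_1}$, isolate the $2\times 2$ block on $\{\ket{01},\ket{10}\}$, and read off the constraint $\frac{1-\lambda}{N_1N_2}\geq\lambda\sigma_0\sigma_1$. That part is fine.

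The sufficiency direction, however, has a real gap. You describe the off-diagonal content of $\ket{\sigma}\bra{\sigma}$ as ``the remaining rank-two piece $\sigma_0\sigma_1(\ket{00}\bra{11}+\ket{11}\bra{00})$'', but in fact $\ket{\sigma}\bra{\sigma}=\sum_{i,j=0}^{r-1}\sigma_i\sigma_j\ket{ii}\bra{jj}$ carries \emph{all} $\binom{r}{2}$ coherences $\sigma_i\sigma_j$, not just the $(0,1)$ one. Your circle-averaging trick with a single phase $\alpha$ on $\ket{\phi_\theta}=\cos\theta\ket{0}+e^{i\alpha}\sin\theta\ket{1}$ produces exactly one coherence; it cannot generate the full pattern $\sum_{i\neq j}\sigma_i\sigma_j\ket{ii}\bra{jj}$ while keeping the diagonal under control. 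You acknowledge this in your last paragraph as ``the delicate point'', but you do not supply a mechanism, and the reduction by projecting onto $\mathrm{span}\{\ket{0},\ket{1}\}^{\otimes 2}$ goes the wrong way for sufficiency (local projection preserves separability, it does not reflect it back).

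The paper's resolution is a genuinely nontrivial construction you are missing. One introduces product vectors $\ket{u_k}=\sum_{i=0}^{r-1}\omega_i^{k}\sigma_i^{1/2}\ket{i}$ with carefully chosen phases $\omega_0=1,\ \omega_1=\omega^{*},\ \omega_2=\omega^{2},\ \omega_3=\omega^{*4},\ldots$ (powers of $2$, alternating conjugation), where $\omega$ is a primitive $n_0$-th root of unity with $n_0$ chosen large enough. The point of this exotic exponent sequence is combinatorial: it forces $\omega_i\omega_j^{*}\omega_p^{*}\omega_q=1$ exactly when $(i=j,\,p=q)$ or $(i=p,\,j=q)$, so that $\frac{1}{n_0}\sum_k\ket{u_k}\bra{u_k}\otimes\ket{u_k^{*}}\bra{u_k^{*}}$ reproduces \emph{all} the required coherences $\sigma_i\sigma_j\ket{i}\bra{j}\otimes\ket{i}\bra{j}$ at once. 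A separate elementary inequality ($\sigma_0\sigma_1\geq\sigma_i\sigma_k$ for every $i\neq k$) then shows the leftover diagonal is itself a separable correction, and this is precisely why the threshold is governed by $\sigma_0\sigma_1$ rather than some sum $\sum_{i<j}\sigma_i\sigma_j$. Without an analogue of this multi-phase averaging, your sufficiency argument does not close.
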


\begin{proof}
The detailed proof is rather technical and cumbersome, thus we have moved some technical lemmas to appendix \ref{Lem} for clarity's sake. We make use of the partial transpose, whose positivity is a necessary condition of separability \cite{Wor76a,Per96a}. By lemma \ref{PPT} $\bar{\rho}^{T_{1}}$ is positive if, and only if, $\lambda\leq \lambda^{*}(z)\equiv \frac{1}{1+N_{1}N_{2}\sigma_{0}(z)\sigma_{1}(z)}$. Thus we must find a criterion to decide for which $\lambda\in[0,\lambda^{*}(z)]$ we can conclude that $\bar{\rho}$ is separable. Since it is clearly separable for $\lambda=0$, if we prove the separability for $\lambda=\lambda^{*}(z)$, then from the convexity of $\textrm{PSD}_{N_{1}\otimes N_{2}}$ and hence of $\textrm{SEP}_{N_{1}\otimes N_{2}}$ our result follows. The separability of $(1-\lambda^{*}(z))\frac{\mathbb{I}_{N_{1}N_{2}}}{N_{1}N_{2}}+\lambda^{*}(z)\ket{\sigma}\bra{\sigma}$ is proved in lemma \ref{SepWerner}.
\end{proof}

This result is interesting in itself even without relation to the whole approach. Firstly, the state $\bar{\rho}$ is a more realistic mathematical description of the experimentally prepared pure quantum state $\ket{\sigma}$ (hence of any pure state because there always exist unitary matrices $U$ and $V$ such that $\rho=(1-\lambda)\frac{\mathbb{I}_{N_{1}N_{2}}}{N_{1}N_{2}}+\lambda|\Psi\rangle\langle\Psi|=U\otimes V^{*}\left((1-\lambda)\frac{\mathbb{I}_{N_{1}N_{2}}}{N_{1}N_{2}}+\lambda\ket{\sigma}\bra{\sigma}\right)U^{\dagger}\otimes V^{t}$ for every $\ket{\Psi}$), since it is ineludible to introduce some noise in the state preparation procedure. Thus states like $\bar{\rho}$ are relevant per se; they are indeed Werner states in arbitrary dimensions \cite{PitRub00a}. Secondly, this result will allow us to generalize the sufficient and necessary condition of separability already proved for some generalized Werner states \cite{PitRub00a,DenChe08a} in the multipartite realm (see below).\\

Merging both results we have a first sufficient and necessary criterion of separability for bipartite states of arbitrary dimensions of the form $\rho=(1-\lambda)M_1\otimes M_2+\lambda\ket{z}\bra{z}$:

\begin{result}\label{ResSufNec}
Let $\rho=(1-\lambda)M_1\otimes M_2+\lambda\ket{z}\bra{z}$, with $M_{i}=F_{i}F_{i}^{\dagger}$, $r_{M_{i}}=N_{i}$, $i=1,2$, and $\lambda\in[0,1]$. Then $\rho$ is separable if, and only if, $\lambda\leq\lambda^{*}(z)$, where $\lambda^{*}(z)=\frac{1}{1+N_{1}N_{2}\sigma_{0}(z)\sigma_{1}(z)}$ and $\sigma_{i}(z)$ are the generalized Schmidt coefficients of $\ket{z}$ with respect to $(M_{1}^{-1},M_{2}^{-1})$.
\end{result}

\subsection{Full-rank states with $r_{E_{\rho}}>1$}

Following along the same line as above, we focus on those cases in which the decomposition \eqref{CEDec} yields a matrix $E_{\rho}$ such that $r_{E_{\rho}}>1$. We arrive at the first incomplete result of this approach.

\begin{result}\label{IncomRes}
Let $\rho=(1-\lambda)M_1\otimes M_2+\lambda E_{\rho}\in\textrm{int}\ \textrm{PSD}_{N_{1}\otimes N_{2}}$ where $M_{i}=F_{i}F_{i}^{\dagger}$, $M_{1}\otimes M_{2}\in\textrm{int}\ \textrm{PSD}_{N_{1}\otimes N_{2}}$, $\lambda\in[0,1]$ and $E_{\rho}=\sum_{k=1}^{K}e_{k}\ket{e_{k}}\bra{e_{k}}$. Then $\rho\in\textrm{SEP}_{N_{1}\otimes N_{2}}$ if $\lambda\leq\bar{\lambda}(E_{\rho})\equiv\frac{1}{\sum_{k=1}^{K}\frac{e_{k}}{\lambda^{*}(e_{k})}}$, where $\lambda^{*}(e_{k})=\frac{1}{1+N_{1}N_{2}\sigma_{0}(e_{k})\sigma_{1}(e_{k})}$.
\end{result}

\begin{proof}
Let $\mu_{k}=\frac{\frac{e_{k}}{\lambda_{k}^{*}}}{\sum_{k=1}^{K}\frac{e_{k}}{\lambda_{k}^{*}}}$ and $\bar{\rho}=\rho|_{\lambda=\bar{\lambda}(E_{\rho})}$. Given that $\sum_{k=1}^{K}e_{k}=1$ (since $\textrm{tr}(E_{\rho})=1$), it is immediately to prove that $$\bar{\rho}=\sum_{k=1}^{K}\mu_{k}\left[(1-\lambda^{*}(e_{k}))M_{1}\otimes M_{2}+\lambda^{*}(e_{k})\ket{e_{k}}\bra{e_{k}}\right].$$

Now this a convex combination of separable matrices, hence it is separable. Thus if $\rho$ is separable when $\lambda=0$ and when $\lambda=\bar{\lambda}(E_{\rho})$, then from the convexity of $\textrm{SEP}_{N_{1}\otimes N_{2}}$ we conclude that $\rho$ is separable whenever $\lambda\leq\bar{\lambda}(E_{\rho})$.
\end{proof}

Regretfully this is an incomplete result because the condition is not sufficient, as the following counterexample shows. Let $\rho=(1-\lambda)\frac{\mathbb{I}_{2}\otimes\mathbb{I}_{ 2}}{4}+\lambda\left(\frac{1}{2}\ket{01}\bra{01}+\frac{1}{2}\ket{\Psi^+}\bra{\Psi^+}\right)$, where $\ket{\Psi^+}=\frac{1}{\sqrt{2}}\left(\ket{00}+\ket{11}\right)$. Then $\bar{\lambda}(E_{\rho})=\frac{1}{2}$, while the PPT condition states that it is separable for all $\lambda\leq\frac{1}{\sqrt{2}}$.\\

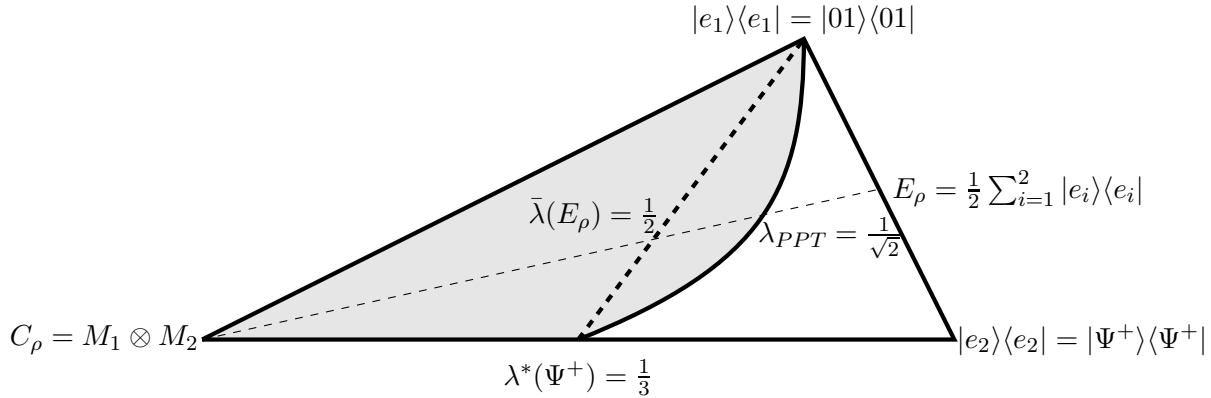
\begin{figure}
\begin{center}
\begin{tikzpicture}
\fill[gray!20!white] (0,0) -- (5.0,0.0) .. controls (7.5,1.0) and (8.0,2.0) .. (8.0,4.0) -- (0,0);
\draw[ultra thick] (0,0)--(10.0,0)--(8.0,4.0)--(0,0);
\draw[ultra thick] (5.0,0.0) .. controls (7.5,1.0) and (8.0,2.0) .. (8.0,4.0);
\draw[dashed,ultra thick] (5.0,0.0)--(8.0,4.0);
\draw[dashed] (0,0)--(9.0,2.0);
\draw (-1.30,0) node {$C_{\rho}=M_{1}\otimes M_{2}$};
\draw (10.85,2.0) node {$E_{\rho}=\frac{1}{2}\sum_{i=1}^{2}\ket{e_{i}}\bra{e_{i}}$};
\draw (8.0,4.25) node {$\ket{e_{1}}\bra{e_{1}}=\ket{01}\bra{01}$};
\draw (11.70,0) node {$\ket{e_{2}}\bra{e_{2}}=\ket{\Psi^{+}}\bra{\Psi^{+}}$};
\draw (5.20,1.65) node {$\bar{\lambda}(E_{\rho})=\frac{1}{2}$};
\draw (8.35,1.35) node {$\lambda_{PPT}=\frac{1}{\sqrt{2}}$};
\draw (5.0,-0.5) node {$\lambda^{*}(\Psi^{+})=\frac{1}{3}$};
\end{tikzpicture}
\end{center}
\caption{\label{ConSec} Bidimensional section of both the cone $\textrm{SEP}_{2\otimes 2}$ (shaded) and the cone $\textrm{PSD}_{2\otimes 2}$ illustrating result \eqref{IncomRes} through a counterexample for its sufficiency (not scaled).}
\end{figure}

This is better illustrated in a geometric fashion (see figure \ref{ConSec}). Result \ref{IncomRes} does not achieve to detect those separable states $\rho=(1-\lambda)C_{\rho}+\lambda E_{\rho}$ with $\lambda\in(\frac{1}{2},\frac{1}{\sqrt{2}}]$ but only those within the \textquotedblleft polygonal\textquotedblright approximation to the cone section generated by $C_{\rho}$, $\ket{e_{1}}\bra{e_{1}}$ and $\ket{e_{2}}\bra{e_{2}}$. Thus far we do not have a clear hint of how to overcome this difficulty arising from the highly non-polyhedral character of $\textrm{SEP}_{N_{1}\otimes N_{2}}$.\\

However we leave as an open question the following: given a full-rank state $\rho$, does there always exist a full-rank product matrix $C_{\rho}=M_{1}\otimes M_{2}$ such that the rank of $E_{\rho}$ is $1$, i.e. $r_{E_{\rho}}=1$? In the light of result \ref{ResSufNec} an affirmative answer would solve completely the separability problem for full-rank states. On the contrary, a negative answer would entail important consequences to result \ref{IncomRes}: the \textquotedblleft polygonal\textquotedblright approximation must be overcome.

\section{Quantum states in product faces}
\label{QuaStaProFac}
In this section we extend the preceding results to those quantum states not having full rank. Notice that this approach is independent of the dimensions of the factors in the tensor product structure and in particular that the decomposition \eqref{CEDec} is valid for any face $F_{\mathcal{S}}$ of the cone $\textrm{PSD}_{N_{1}\otimes N_{2}}$. We need the orthogonal complementation of a matrix:

\begin{defin}
Let $F=\sum_{i=0}^{r-1}\ket{c_{i}}\bra{i}$ be an $N$-dimensional square matrix with rank $r$. We define the orthogonal complementation of $F$, and denote it by $\bar{F}$, as the matrix $\bar{F}=\sum_{i=0}^{r-1}\ket{c_{i}}\bra{i}+\sum_{i=r}^{N-r}\ket{c^{\perp}_{i}}\bra{i}$, where the vectors $\{\ket{c^{\perp}_{i}}\}_{i=r,\dots,N-r}$ spans the orthogonal subspace to $R(F)$.
\end{defin}

We agree to define $\bar{F}=F$ whenever $r=N$. Notice that there always exists $\bar{F}^{-1}$. This definition allows us to write the following identity:

\begin{prop}
Let $M=FF^{\dagger}$, with $r_{M}=r$, be an $N$-dimensional matrix such that $M=\sum_{i=0}^{r-1}\ket{m_{k}}\bra{m_{k}}$, where the kets $\ket{m_{k}}$ are unnormalized. Then $$M=FF^{\dagger}=\bar{F}\left(\mathbb{I}_{r}\oplus\mathbb{O}_{N-r}\right)\bar{F}^{\dagger},$$
\noindent where $\mathbb{O}_{m}$ denotes the $m-$dimensional square null matrix and $\oplus$ denotes the direct sum of matrices.
\end{prop}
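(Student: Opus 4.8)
The plan is to unravel the definitions of $\bar F$ and of the factorization $M = FF^\dagger$ and check that both sides act identically. First I would note that since $M = FF^\dagger$ with $r_M = r$, the matrix $F$ (in the form $F = \sum_{i=0}^{r-1}\ket{c_i}\bra{i}$ displayed in the definition of orthogonal complementation, with the remaining columns zero) has exactly the columns $\ket{c_i}$, $i=0,\dots,r-1$, spanning $R(M)=R(F)$; equivalently $M = \sum_{i=0}^{r-1}\ket{c_i}\bra{c_i}$, which identifies $\ket{m_k}=\ket{c_k}$. The orthogonal complementation $\bar F = \sum_{i=0}^{r-1}\ket{c_i}\bra{i} + \sum_{i=r}^{N-1}\ket{c_i^{\perp}}\bra{i}$ just fills the previously-zero columns with vectors $\ket{c_i^\perp}$ spanning $R(F)^\perp$, and is invertible because its columns now form a basis of $\mathbb{C}^N$.

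The key computation is then simply
\begin{equation}
\bar F\left(\mathbb{I}_r\oplus\mathbb{O}_{N-r}\right)\bar F^\dagger
= \left(\sum_{i=0}^{N-1}\ket{f_i}\bra{i}\right)\left(\sum_{j=0}^{r-1}\ket{j}\bra{j}\right)\left(\sum_{k=0}^{N-1}\ket{k}\bra{f_k}\right)
= \sum_{i=0}^{r-1}\ket{f_i}\bra{f_i},
\end{equation}
where I write $\ket{f_i}=\ket{c_i}$ for $i<r$ and $\ket{f_i}=\ket{c_i^\perp}$ for $i\ge r$; the projector $\mathbb{I}_r\oplus\mathbb{O}_{N-r}=\sum_{j=0}^{r-1}\ket{j}\bra{j}$ kills all the $\ket{c_i^\perp}$ contributions and keeps only the first $r$ terms, giving exactly $\sum_{i=0}^{r-1}\ket{c_i}\bra{c_i} = FF^\dagger = M$. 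The case $r=N$ is covered by the convention $\bar F = F$ and the fact that $\mathbb{I}_N\oplus\mathbb{O}_0 = \mathbb{I}_N$, so the identity reduces to $M=FF^\dagger$ trivially.

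Honestly, there is no real obstacle here: the statement is essentially a bookkeeping identity, and the only mild subtlety is making sure the indexing in the definition of $\bar F$ is read correctly — the sum $\sum_{i=r}^{N-r}$ appearing there should range over enough indices to supply a full complementary basis (i.e. it should be understood as $i=r,\dots,N-1$, giving $N-r$ extra columns), and one should remark that the particular choice of the orthonormal vectors $\ket{c_i^\perp}$ is immaterial since they are annihilated by the central projector. Once that is observed, the proof is the one-line block-matrix multiplication above.
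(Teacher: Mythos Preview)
Your proof is correct and follows essentially the same approach as the paper: identify $F=\sum_{k=0}^{r-1}\ket{m_k}\bra{k}$, write out $\bar F$, and observe that conjugating $\mathbb{I}_r\oplus\mathbb{O}_{N-r}$ by $\bar F$ kills the complementary columns and reproduces $\sum_{k=0}^{r-1}\ket{m_k}\bra{m_k}$. The paper's version simply says ``the result follows readily'' where you spell out the one-line multiplication, and your remark about the index range in the definition of $\bar F$ (it should be $i=r,\dots,N-1$) is a valid correction of a typo.
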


\begin{proof}
Since we can write $F=\sum_{i=0}^{r-1}\ket{m_{k}}\bra{k}$, we have $\bar{F}=\sum_{i=0}^{r-1}\ket{m_{k}}\bra{k}+\sum_{i=r}^{N-1}\ket{m_{k}^{\perp}}\bra{k}$, and the result follows readily.
\end{proof}

We can formulate the following

\begin{result}\label{ProFaces}
Let $\rho=(1-\lambda)M_1\otimes M_2+\lambda E_{\rho}$ with $M_{i}=F_{i}F_{i}^{\dagger}$, $r_{M_{i}}=r_{i}$, $i=1,2$, $r_{\rho}=r_{1}\cdot r_{2}$, $\lambda\in[0,1]$ and $E_{\rho}=\sum_{k=1}^{K}e_{k}\ket{e_{k}}\bra{e_{k}}$. Let $\bar{F}_{i}$ be the orthogonal complementation of matrix $F_{i}$, $i=1,2$. Then $\rho\in\textrm{SEP}_{N_{1}\otimes N_{2}}$ if $\lambda\leq\bar{\lambda}(E_{\rho})\equiv\frac{1}{\sum_{k=1}^{K}\frac{e_{k}}{\lambda^{*}(e_{k})}}$, where $\lambda^{*}(e_{k})=\frac{1}{1+r_{1}r_{2}\sigma_{0}(e_{k})\sigma_{1}(e_{k})}$ and $\sigma_{i}(e_{k})$ are the generalized Schmidt coefficients of $\ket{e_{k}}$ with respect to $\left((\bar{F}_{1}\bar{F}_{1}^{\dagger})^{-1},(\bar{F}_{2}\bar{F}_{2}^{\dagger})^{-1}\right)$. If $K=1$ the condition is also sufficient.
\end{result}

\begin{proof}
The proof runs along similar lines to that of previous sections. We can decompose

\begin{equation}
\rho=(1-\lambda)M_{1}\otimes M_{2}+\lambda\ket{z}\bra{z}=\bar{F}_{1}\otimes\bar{F}_{2}\left[(1-\lambda)\frac{\left(\mathbb{I}_{r_1}\oplus\mathbb{O}_{N-r_1}\right)\otimes\left(\mathbb{I}_{r_2}\oplus\mathbb{O}_{N-r_2}\right)}{r_{1}r_{2}}+\lambda\ket{\sigma}\bra{\sigma}\right]\bar{F}_{1}^{\dagger}\otimes\bar{F}_{2}^{\dagger},
\end{equation}

\noindent where $\ket{\sigma}=\sum_{j=0}^{r-1}\sigma_{j}(z)\ket{jj}$, $\sigma_{j}(z)$ being the generalized Schmidt coefficients of $\ket{z}$ with respect to $\left((\bar{F}_{1}\bar{F}_{1}^{\dagger})^{-1},(\bar{F}_{2}\bar{F}_{2}^{\dagger})^{-1}\right)$. Using the same decomposition as in appendix \ref{Lem}, then $(1-\lambda)\frac{\left(\mathbb{I}_{r_1}\oplus\mathbb{O}_{N-r_1}\right)\otimes\left(\mathbb{I}_{r_2}\oplus\mathbb{O}_{N-r_2}\right)}{r_{1}r_{2}}+\lambda\ket{\sigma}\bra{\sigma}$ is separable if, and only if, $\lambda\leq\lambda^{*}_{F_{\mathcal{S}_{\rho}}}(z)=\frac{1}{1+r_{1}r_{2}\sigma_{0}(z)\sigma_{1}(z)}$.\\

In those cases in which $E_{\rho}=\sum_{k=1}^{K}e_{k}\ket{e_{k}}\bra{e_{k}}$ the proof is exactly the same as that of result \ref{IncomRes}. In these cases, the condition is only necessary.
\end{proof}

This result only applies to those quantum states belonging to faces which contain at least a product matrix of the form $M_{1}\otimes M_{2}$, which are those faces $F_{\mathcal{S}}$ parametrized by tensor product linear subspaces, i.e. faces of the form $F_{\mathcal{S}}=F_{\mathcal{S}_{1}\otimes\mathcal{S}_{2}}$. For this reason result \ref{ProFaces} is the second incomplete result of this approach.

\section{Plausible generalizations to multipartite systems}
\label{PlaGenMulSys}

Since the preceding results are independent of the dimensions of the quantum subsystems, they can be extended to the multipartite realm to detect genuine entanglement in the so-called generalized Werner states:

\begin{result}
Let $\rho=(1-\lambda)M_{1}\otimes\dots\otimes M_{n}+\lambda\ket{z}\bra{z}$, where $M_{i}=F_{i}F_{i}^{\dagger}$ and $r_{M_{i}}=r_{i}$, $i=1,\dots,n$. Then $\rho$ is genuinely entangled if, and only if,

\begin{equation}
\lambda>\lambda^{*}(z)=\frac{1}{1+\left(\prod_{i=1}^{n}r_{i}\right)\min_{k=1,\dots, n-1}\{\sigma_{0}(z^{i_{1}\dots i_{k}|i_{k+1}\dots i_{n}})\sigma_{1}(z^{i_{1}\dots i_{k}|i_{k+1}\dots i_{n}})},
\end{equation}

\noindent where $\sigma_{j}(z^{i_{1}\dots i_{k}|i_{k+1}\dots i_{n}})$ are the generalized Schmidt coefficients of $\ket{z}$ with respect to\\ $\left(\bigotimes_{j=i_{1}}^{i_{k}}(\bar{F}_{j}\bar{F}_{j}^{\dagger})^{-1},\bigotimes_{j=i_{k+1}}^{i_{n}}(\bar{F}_{j}\bar{F}_{j}^{\dagger})^{-1}\right)$ viewed as a bipartite quantum state of systems $i_{1}\dots i_{k}$ and the rest $i_{k+1}\dots i_{n}$.

\end{result}

\begin{proof}
The result readily follows after applying result \ref{ResSufNec} to each bipartition of state $\ket{z}$, since to be genuinely entangled it cannot be biseparable for any bipartition \cite{EisGro05a}.
\end{proof}

This generalizes previous results in the literature \cite{PitRub00a,DenChe08a}. However, notice that we can only conclude about the genuine entanglement of multipartite quantum states and not about their separability, since in this case it is necessary to distinguish between full separability and $m-$order separability \cite{EisGro05a}. This will be undertaken elsewhere.\\

As an illustration of this result let us focus upon three-partite systems and, in particular, upon states of the form $\rho=(1-\lambda)\frac{\mathbb{I}_{8}}{8}+\lambda\ket{\textrm{GHZ}}\bra{\textrm{GHZ}}$, where $\ket{\textrm{GHZ}}$ is the well-known GHZ state \cite{GreHorShiZei90a}. Then it is immediate to conclude that $\rho$ is genuinely entangled if, and only if, $\lambda>\frac{1}{1+8\cdot\frac{1}{2}}=\frac{1}{5}$. Analogously, if $\rho=(1-\lambda)\frac{\mathbb{I}_{8}}{8}+\lambda\ket{\textrm{W}}\bra{\textrm{W}}$, where $\ket{\textrm{W}}$ is the W state \cite{DurVidCir00a}, $\rho$ is genuinely entangled if, and only if, $\lambda>\frac{1}{1+8\cdot\sqrt{\frac{2}{3}}\frac{1}{\sqrt{3}}}=\frac{1}{1+\frac{8\sqrt{2}}{3}}$.\\

Furthermore we see under this approach a natural connection of the SLOCC classification of pure quantum states with the question of separability of mixed states. Let $\ket{\Psi}$ and $\ket{z}$ be connected through the invertible local operations (ILOs) $G_{i}$, $i=1,\dots,n$, i.e. $\ket{z}=G_{1}\otimes\dots\otimes G_{n}\ket{\Psi}$ (see \cite{DurVidCir00a}). Then the multipartite state $\rho=(1-\lambda)M_{1}\otimes\dots\otimes M_{n}+\lambda\ket{z}\bra{z}$, where $M_{i}=F_{i}F_{i}^{\dagger}$ and $r_{M_{i}}=r_{i}$, is genuinely entangled if, and only if, the state $$\bar{\rho}=(1-\lambda)\frac{\left(\mathbb{I}_{r_{1}}\oplus\mathbb{O}_{N_{1}-r_{1}}\right)\otimes\dots\otimes\left(\mathbb{I}_{r_n}\oplus\mathbb{O}_{N_{n}-r_{n}}\right)}{r_{1}\dots r_{n}}+\lambda \bar{F}_{1}^{-1}G_{1}\otimes\dots\otimes \bar{F}_{n}^{-1}G_{n}\ket{\Psi}\bra{\Psi}G^{\dagger}_{1}\bar{F}^{-\dagger}_{1}\otimes\dots\otimes G^{\dagger}_{n}\bar{F}^{-\dagger}_{n}$$\noindent is genuinely entangled, which will be the case whenever $$\lambda>\frac{1}{1+r_{1}\dots r_{n}\min_{k=1,\dots,n-1}\{\sigma_{0}^{i_{1}\dots i_{k}|i_{k+1}\dots i_{n}}\sigma_{1}^{i_{1}\dots i_{k}|i_{k+1}\dots i_{n}}\}},$$\noindent where with relaxed notation $\sigma_{j}^{i_{1}\dots i_{k}|i_{k+1}\dots i_{n}}$ denote the generalized Schmidt coefficients of $\bar{F}_{1}^{-1}G_{1}\otimes\dots\otimes \bar{F}_{n}^{-1}G_{n}\ket{\Psi}$. In this case the canonical states $\ket{\Psi}$ of each class should be chosen with the additional criterion of easing the computation of these generalized Schmidt coeficcients. In this sense, the states with permutation symmetry are excellent candidates \cite{BasKriMatGodLamSol09a}. To illustrate this point consider a n-qubit $W$ state $\ket{W_{n}}=\frac{1}{\sqrt{n}}\left[\ket{10\dots 0}+\ket{01\dots0}+\dots+\ket{00\dots1}\right]$. The state $(1-\lambda)\frac{\mathbb{I}_{2^{n}}}{2^{n}}+\lambda\ket{W_{n}}\bra{W_{n}}$ will be genuinely entangled if, and only if, $\lambda>\frac{1}{1+2^{n}\frac{\sqrt{n-1}}{n}}$. Another example arises when considering the generalized $n$-qudit GHZ state $\ket{GHZ_{n}^{(d)}}=\frac{1}{\sqrt{d}}\sum_{i=0}^{d-1}\ket{i\dots i}$. The state $(1-\lambda)\frac{\mathbb{I}_{d^{n}}}{d^{n}}+\lambda\ket{GHZ_{n}^{(d)}}\bra{GHZ_{n}^{(d)}}$ will be genuinely entangled if, and only if, $\lambda>\frac{1}{1+d^{n-1}}$. These results allow us to investigate the $d$-dimensional multipartite genuine entanglement as the number of qudits increases: we must focus on the asymptotic behaviour of $g(n)\equiv d^{n}\min_{k=1\dots,n-1}\{\sigma_{0}^{i_{1}\dots i_{k}|i_{k+1}\dots i_{n}}\sigma_{1}^{i_{1}\dots i_{k}|i_{k+1}\dots i_{n}}\}$. For instance, if $g(n)=O(1)$, then the amount of entanglement (in the sense of the amount of permitted noise in the pure  state preparation) does not change with $n$. We conjecture that indeed the quantity $g(n)$ can be the basis for a multiqudit entanglement measure.

\section{Conclusions}
\label{Con}

We can summarize the preceding results in the following partial separability check:

\begin{summary}
Let $\rho$ be a mixed bipartite quantum state of dimensions $N_{1}$ and $N_{2}$:
\begin{enumerate}
\item[i.] It can always be decomposed as
    \begin{equation}\label{DetDec}
    \rho=(1-\lambda)C_{\rho}+\lambda\left[\sum_{k=1}^{K}e_{k}\ket{e_{k}}\bra{e_{k}} \right],
    \end{equation}
    \noindent where $C_{\rho}$ is an arbitrary matrix such that $r_{C_{\rho}}=r_{\rho}$, $\{\ket{e_{k}}\}$ are linear independent vectors and $K<r_{\rho}$.
\item[ii.] Let $\rho=(1-\lambda)M_{1}\otimes M_{2}+\lambda\left[\sum_{k=1}^{K}e_{k}\ket{e_{k}}\bra{e_{k}} \right]$, where $M_{i}=F_{i}F_{i}^{\dagger}$, $i=1,2$ and $r_{i}\equiv r_{M_{i}}$. Let $\bar{F}_{i}$ be the orthogonally complemented matrix of $F_{i}$, $i=1,2$. Then if
    \begin{equation}\label{ConSep}
    \lambda\leq\frac{1}{\sum_{k=1}^{K}e_{k}\left[1+r_{1}r_{2}\sigma_{0}(e_{k})\sigma_{1}(e_{k})\right]},
    \end{equation}
    \noindent where $\sigma_{j}(e_{k})$ are the generalized Schmidt coefficients of $\ket{e_{k}}$ with respect to\\ $\left((\bar{F}_{1}\bar{F}_{1}^{\dagger})^{-1},(\bar{F}_{2}\bar{F}_{2}^{\dagger})^{-1}\right)$, the quantum state $\rho$ is separable.
\item[iii.] In the same conditions, if $K=1$, the condition is also sufficient.
\end{enumerate}
\end{summary}

Reminding the cone structure of the set of unnormalized mixed quantum states, the previous results show two limitations. On the one hand, results (ii) and (iii) only apply to those states beloging to faces $F_{\mathcal{S}}$ with product structure $F_{\mathcal{S}}=F_{\mathcal{S}_{1}\otimes\mathcal{S}_{2}}$. On the other hand, when the universal decomposition \eqref{DetDec} is such that $K>1$, the condition \eqref{ConSep} is only necessary.\\

Despite these limitations, the approach is not exhausted, since many other choices for the matrix $C_{\rho}$ exist. We believe that the present limitations arise from the simple choice of $C_{\rho}$ as a tensor product matrix and that a more complex structure will overcome both of them. This is currently under investigation.\\

The independence of this approach with respect to the dimensions allows us to give the first steps to extend the results to multipartite systems. And in particular, result (iii) can be extended to detect genuine entanglement in generalized Werner states:

\begin{summary}
Let $\rho=(1-\lambda)M_{1}\otimes\dots\otimes M_{n}+\lambda\ket{z}\bra{z}$, where $M_{i}=F_{i}F_{i}^{\dagger}$, $i=1,\dots,n$ and $r_{i}\equiv r_{M_{i}}$. Let $\bar{F}_{i}$ be the orthogonally complemented matrix of $F_{i}$, $i=1,\dots,n$. Then $\rho$ is genuinely entangled if, and only if,
\begin{equation}\label{MulCon}
\lambda>\frac{1}{1+\left(\prod_{i=1}^{n}r_{i}\right)\min_{k=1,\dots, n-1}\{\sigma_{0}(z^{i_{1}\dots i_{k}|i_{k+1}\dots i_{n}})\sigma_{1}(z^{i_{1}\dots i_{k}|i_{k+1}\dots i_{n}})},
\end{equation}

\noindent where $\sigma_{j}(z^{i_{1}\dots i_{k}|i_{k+1}\dots i_{n}})$ are the generalized Schmidt coefficients of $\ket{z}$ with respect to $$\left(\bigotimes_{j=i_{1}}^{i_{k}}(\bar{F}_{j}\bar{F}_{j}^{\dagger})^{-1},\bigotimes_{j=i_{k+1}}^{i_{n}}(\bar{F}_{j}\bar{F}_{j}^{\dagger})^{-1}\right)$$ viewed as a bipartite quantum state of systems $i_{1}\dots i_{k}$ and the rest $i_{k+1}\dots i_{n}$.
\end{summary}

Condition \eqref{MulCon} only detects genuine entanglement and does not resolve between full separability and $m-$order separability. The main reason lies on the fact that decomposition \eqref{DetDec} is thought to detect bipartite entanglement. Thus to distinguish between the different sorts of multipartite entanglement, we must adapt the choices of both $C_{\rho}$ and $E_{\rho}$ to this situation. This will be undergone elsewhere. However, notice that the approach is not exhausted once more.\\

Finally a natural connection between the multipartite separability problem and the pure state classification under SLOCC arises, since the entanglement of any state admitting the choice $C_{\rho}=M_{1}\otimes\dots\otimes M_{n}$ is equivalent to that of $\bar{\rho}=(1-\lambda)\frac{\mathbb{I}_{N_{1}\dots N_{n}}}{N_{1}\dots N_{n}}+\lambda H_{1}\otimes\dots\otimes H_{n}\ket{\Psi}\bra{\Psi}H_{1}^{\dagger}\otimes\dots\otimes H_{n}^{\dagger}$, where $\ket{\Psi}$ is the canonical state of the particular SLOCC class. An appropiate choice of $\ket{\Psi}$ will drive us to an easier computation of the generalized Schmidt coefficients of $H_{1}\otimes\dots\otimes H_{n}\ket{\Psi}$.\\


\section*{Acknowledgments}

The authors acknowledge financial support from the Spanish MICINN project No.\ MICINN-08-FIS2008-00288.

\appendix

\section{The generalized singular value decomposition}
\label{GSVD}

The singular value decomposition (SVD) is a standard tool in matrix analysis (cf. e.g. \cite{HorJoh91a}) by which any rectangular matrix $M\in\mathcal{M}_{m,n}(\mathbb{C})$ can be written as

\begin{equation}\label{SVD}
M=UDV^{\dagger}=\sum_{k=1}^{r}\sigma_{k}\ket{u_{k}}\bra{v_{k}},
\end{equation}

\noindent where $\sigma_{1}\geq\sigma_{2}\geq\dots\geq\sigma_{r}>0$ are the singular values, the matrix $U$, whose columns $\ket{u_{i}}$, $i=1,\dots,m$, are the so-called left singular vectors, is unitary and the matrix $V$, whose columns $\ket{v_{j}}$, $j=1,\dots,n$, are the so-called right singular vectors, is also unitary. The integer $r$ is indeed the rank of the matrix $M$.\\

The generalized singular value decomposition (GSVD) is a possible extension of this result (see e.g. \cite{Loa76a}) which states that, given two $m-$ and $n-$dimensional positive definite matrices $A$ and $B$, any rectangular matrix $M\in\mathcal{M}_{m,n}(\mathbb{C})$ can be written as

\begin{equation}\label{SVD}
M=\bar{U}D\bar{V}^{\dagger}=\sum_{k=1}^{r}\sigma_{k}^{A,B}\ket{u_{k}^{A,B}}\bra{v_{k}^{A,B}},
\end{equation}

\noindent where $\sigma_{1}^{A,B}\geq\sigma_{2}^{A,B}\geq\dots\geq\sigma_{r}^{A,B}>0$ are the generalized singular values with respect to $(A,B)$, the matrix $\bar{U}$, whose columns $\ket{u_{i}^{A,B}}$, $i=1,\dots,m$, are the so-called left generalized singular vectors, is $A$-unitary \footnote{$\bar{U}^{\dagger}A\bar{U}=\mathbb{I}_{m}$.} and the matrix $V$, whose columns $\ket{v_{j}^{A,B}}$, $j=1,\dots,n$, are the so-called right singular vectors, is $B$-unitary \footnote{$\bar{V}^{\dagger}B\bar{V}=\mathbb{I}_{n}$.}. The integer $r$ is again the rank of the matrix $M$. In the text we have dropped out the specification $(A,B)$ to ease the notation.\\

Notice that the difference between the SVD and the GSVD is just the scalar product of the rows and the columns of $M$ under which the standard proof of the decomposition is established (see \cite{HorJoh91a}).

\section{Lemmas}
\label{Lem}

We gather some technical lemmas necessary to prove the preceding results.

\begin{lemma}\label{CompRoots}
Let $\omega^{k}$, where $k=0,1,\dots,n-1$, be the $n$th complex roots of the unit. Then
\begin{equation}
\sum_{k=0}^{n-1}[\omega^{k}]^{p}=\left\{\begin{array}{ll}
0& \forall p\quad\textrm{ such that }\frac{p}{n}\notin\mathbb{N},\\
n &\forall p\quad\textrm{ such that }\frac{p}{n}\in\mathbb{N}.
\end{array}\right.
\end{equation}
\end{lemma}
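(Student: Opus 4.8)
The plan is to evaluate the geometric-type sum $\sum_{k=0}^{n-1}[\omega^k]^p$ by separating into the two stated cases according to whether $n \mid p$. First I would fix a primitive $n$th root of unity, say $\omega = e^{2\pi i/n}$, so that the complete list of $n$th roots of unity is exactly $\{\omega^k : k = 0, 1, \dots, n-1\}$; then $[\omega^k]^p = (\omega^p)^k$, and the sum becomes $\sum_{k=0}^{n-1} (\omega^p)^k$, a finite geometric series with ratio $q := \omega^p = e^{2\pi i p/n}$.

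The second step is the case split. If $p/n \in \mathbb{N}$, write $p = mn$ with $m \in \mathbb{N}$; then $q = \omega^{mn} = (\omega^n)^m = 1^m = 1$, so each of the $n$ summands equals $1$ and the sum is $n$. If $p/n \notin \mathbb{N}$, then $q = e^{2\pi i p/n} \neq 1$ (since $p/n$ is not an integer, $2\pi i p/n$ is not an integer multiple of $2\pi i$), so the closed form for a geometric series applies:
\begin{equation}
\sum_{k=0}^{n-1} q^k = \frac{q^n - 1}{q - 1}.
\end{equation}
Now $q^n = (\omega^p)^n = (\omega^n)^p = 1^p = 1$, so the numerator vanishes while the denominator does not, giving $0$. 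This exhausts both cases and establishes the claimed formula.

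There is essentially no serious obstacle here; the only point requiring a moment's care is the justification that $q \neq 1$ precisely when $n \nmid p$, which is immediate from the fact that $\omega$ has exact multiplicative order $n$ (so $\omega^p = 1$ iff $n \mid p$). One should also note the harmless edge case $p = 0$, where $p/n = 0 \in \mathbb{N}$ and the sum is trivially $n$, consistent with the stated result. If one prefers to avoid fixing a specific primitive root, the same argument goes through for any enumeration of the $n$th roots of unity as powers $\{\zeta^k\}_{k=0}^{n-1}$ of a single primitive root $\zeta$, since the value of the sum depends only on the set of roots, not the labelling.
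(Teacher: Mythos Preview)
Your proof is correct and is precisely the standard elementary argument the paper has in mind: the paper's own proof consists of the single word ``Elementary'' with a citation to a complex-variables textbook, and the geometric-series computation you give is exactly that elementary argument.
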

\begin{proof}
Elementary \cite{ChuBro08a}.
\end{proof}

\begin{lemma}\label{Roots}
Let $I=\{0,\dots,r-1\}$ and $\omega=e^{\frac{2\pi i}{n}}$. Let  $\omega_{0}=1$, $\omega_{1}=\omega^{*}$, $\omega_{2}=\omega^{2}$, $\omega_{3}=\omega^{*4}$, \dots,$\omega_{m}=\omega^{(*)2^{m}}$,\dots, $m\geq 1$, be complex phases where the complex conjugation appears only in the odd indices. Then for all $i,j,p,q\in I$ we have
\begin{equation}
\omega_{i}\omega_{j}^{*}\omega_{p}^{*}\omega_{q}=\left\{\begin{array}{ll}
1& \textrm{ if } i=j, p=q\qquad\textrm{or}\qquad i=p,j=q\quad,\\
\omega^{m_{ijpq}} & \textrm{ otherwise, where } m_{ijpq}\in\mathbb{Z}-\{0\}.
\end{array}\right.
\end{equation}
\end{lemma}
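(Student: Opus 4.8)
The plan is to prove Lemma \ref{Roots} by direct case analysis on the exponents, reducing everything to Lemma \ref{CompRoots} (or rather, to the elementary fact that $\omega^{s}=1$ iff $n\mid s$). Write $\omega_{m}=\omega^{(-1)^{m}2^{m}}$ for $m\geq 1$ and $\omega_{0}=1=\omega^{0}$; that is, each $\omega_{t}$ equals $\omega^{\epsilon_{t}}$ with $\epsilon_{0}=0$ and $\epsilon_{t}=(-1)^{t}2^{t}$ for $t\geq1$. Then
\begin{equation}
\omega_{i}\omega_{j}^{*}\omega_{p}^{*}\omega_{q}=\omega^{\,\epsilon_{i}-\epsilon_{j}-\epsilon_{p}+\epsilon_{q}},
\end{equation}
so the whole statement is the claim that the integer $S_{ijpq}\equiv\epsilon_{i}-\epsilon_{j}-\epsilon_{p}+\epsilon_{q}$ is a multiple of $n$ precisely in the two listed ``diagonal'' cases (for $i,j,p,q$ ranging over $I=\{0,\dots,r-1\}$, with $r$ tacitly small enough relative to $n$), and that otherwise $\omega^{S_{ijpq}}=\omega^{m_{ijpq}}$ with $m_{ijpq}\not\equiv0\pmod n$. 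Obviously if $i=j,p=q$ or $i=p,j=q$ then $S_{ijpq}=0$ and the product is $1$, which is the easy direction.

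For the converse I would exploit the binary, geometric-growth structure of the exponents $\epsilon_{t}=\pm 2^{t}$: the key arithmetic fact is that a signed sum of at most four distinct powers of two (with a controlled number of repetitions) cannot vanish unless the terms cancel in pairs. Concretely, I would argue that if $\{i,p\}\neq\{j,q\}$ as multisets, then after cancelling any common indices the remaining nonzero terms among $\epsilon_{i},-\epsilon_{j},-\epsilon_{p},\epsilon_{q}$ are $\pm$ distinct powers of two (recall $\epsilon_{t}=(-1)^{t}2^{t}$, so the sign is tied to the parity of the index and two equal-magnitude terms with opposite overall sign force equal indices), and a nonempty $\pm$-combination of distinct powers of two is nonzero — indeed its absolute value lies strictly between consecutive powers of two by the usual ``greedy/binary'' estimate $2^{t}>2^{t-1}+\dots+2^{0}$. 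Hence $S_{ijpq}\neq0$ in all non-diagonal cases. One then has to upgrade ``$S_{ijpq}\neq 0$'' to ``$S_{ijpq}\not\equiv 0 \pmod n$'': this uses that $|S_{ijpq}|\leq 2^{i}+2^{j}+2^{p}+2^{q}< 2^{r+1}$, which is $<n$ under the implicit hypothesis that $n$ is large compared with $r$ (the regime in which this lemma is applied in the proof of Proposition \ref{ExtPro}); so a nonzero $S_{ijpq}$ cannot be a nonzero multiple of $n$, giving $m_{ijpq}\in\mathbb{Z}-\{0\}$ with $\omega^{m_{ijpq}}\neq 1$.

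I would organise the non-diagonal case into subcases according to how many of the four indices coincide (all distinct; exactly one coincidence of the form $i=j$ or $p=q$ or $i=q$ or $j=p$ but not the paired coincidences excluded in the statement; two coincidences), each time cancelling the matched terms and reducing to a $\pm$-sum of one, two, three, or four distinct powers of two, then invoking the binary nonvanishing estimate. The bookkeeping on signs — making sure that equal magnitudes $2^{t}$ appearing with opposite sign really do force the indices to be equal, using $\epsilon_{t}=(-1)^{t}2^{t}$ — is the one point that needs care but is purely mechanical.

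The main obstacle I expect is precisely this sign/parity bookkeeping together with pinning down the implicit smallness hypothesis ``$r$ small relative to $n$'' that makes ``$S\neq 0$'' imply ``$n\nmid S$''; the paper states the lemma somewhat informally, and a clean writeup should either make that hypothesis explicit or phrase the conclusion as ``$\omega_{i}\omega_{j}^{*}\omega_{p}^{*}\omega_{q}=\omega^{m_{ijpq}}$ with $0<|m_{ijpq}|<2^{r+1}$'', which is what is actually needed downstream. Everything else — the reduction to a signed sum of powers of two and the greedy estimate showing such a sum is nonzero — is short and elementary.
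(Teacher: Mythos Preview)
Your reduction to the integer $S_{ijpq}=\epsilon_i-\epsilon_j-\epsilon_p+\epsilon_q$ with $\epsilon_0=0$, $\epsilon_t=(-1)^t 2^t$ is exactly the right framing, and it is far more explicit than the paper's own argument, which is nothing more than the bare assertion that the non-diagonal cases are ``(almost) evident''. So there is no real proof in the paper to compare against; you are supplying what the authors omitted.

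The gap, however, is precisely where you flagged it, and it is fatal rather than ``purely mechanical''. In the subcase $i=q$ (or symmetrically $j=p$) the two like-signed copies do not cancel; they add to $2\epsilon_i=\pm 2^{i+1}$, which is again a single power of two and can combine nontrivially with the remaining terms. Concretely, $(i,j,p,q)=(1,2,3,1)$ gives $S=-2-4+8-2=0$ with your exponents (and $S=-1-2+4-1=0$ with the paper's explicit values $\epsilon_1=-1,\epsilon_2=2,\epsilon_3=-4$), yet this tuple is neither $(i{=}j,\,p{=}q)$ nor $(i{=}p,\,j{=}q)$. Thus your ``nonempty $\pm$-combination of distinct powers of two is nonzero'' step fails here, because doubling a power of two produces the next one. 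This is not a defect of your write-up alone: the lemma as stated is actually false once $r\geq 4$, and the identity of Lemma~\ref{MatIden} built on it then acquires spurious off-diagonal terms such as $\sigma_1\sqrt{\sigma_2\sigma_3}\,\ket{1}\bra{2}\otimes\ket{3}\bra{1}$. A repair would need phases whose exponents form a genuine Sidon ($B_2$) set---for instance $\epsilon_m$ growing like $3^m$---rather than $\pm 2^m$; the ``$n$ large'' issue you identified is secondary to this combinatorial failure.
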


\begin{proof}
 The expression $\omega_{i}\omega_{j}^{*}\omega_{p}^{*}\omega_{q}$, with $\omega_{l}=1,\omega^{*},\omega^{2},\omega^{4*},\omega^{8},\omega^{16*},\dots$, reduces to $1$ only in the stated cases, that is, if $i=j, p=q$ or $i=p,j=q$. In the rest of cases, it is (almost) evident that the expression reduces to a power of $\omega$ or of $\omega^{*}$.
\end{proof}

\begin{lemma}\label{DesigNumer}
Let $\sigma_{0}\geq\sigma_{1}\geq\dots\geq\sigma_{r-1}>0$ be
 such that $\sum_{i=0}^{r-1}\sigma_{i}^{2}=1$. Then we have

\begin{equation}
\sigma_{0}\sigma_{1}\sum_{k=0}^{r-1}\ket{k}\bra{k}+\sigma_{i}^{2}\ket{i}\bra{i}\geq\sigma_{i}^{2}\ket{i}\bra{i}+\sum_{\substack{k=0\\k\neq i}}^{r-1}\sigma_{i}\sigma_{k}\ket{k}\bra{k},\qquad \forall i=0,\dots,r-1.
\end{equation}
\end{lemma}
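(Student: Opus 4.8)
The plan is to exploit the fact that every operator appearing in the claimed inequality is diagonal in the orthonormal computational basis $\{\ket{k}\}_{k=0}^{r-1}$, so that an operator inequality $A\geq B$ between such operators is equivalent to the family of scalar inequalities $\bra{k}A\ket{k}\geq\bra{k}B\ket{k}$ for $k=0,\dots,r-1$. First I would cancel the common summand $\sigma_{i}^{2}\ket{i}\bra{i}$ that sits on both sides; what then remains to establish is
\begin{equation}
\sigma_{0}\sigma_{1}\sum_{k=0}^{r-1}\ket{k}\bra{k}\;\geq\;\sum_{\substack{k=0\\k\neq i}}^{r-1}\sigma_{i}\sigma_{k}\ket{k}\bra{k},
\end{equation}
i.e. the difference of the two sides, which is again diagonal, must have non-negative diagonal entries. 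The $k=i$ entry of the left-hand side is $\sigma_{0}\sigma_{1}$ and of the right-hand side is $0$; since all $\sigma_{j}>0$ this contribution is manifestly non-negative.

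For $k\neq i$ the entrywise requirement is $\sigma_{0}\sigma_{1}\geq\sigma_{i}\sigma_{k}$, and here I would use only the ordering $\sigma_{0}\geq\sigma_{1}\geq\dots\geq\sigma_{r-1}>0$. Given two distinct indices $i$ and $k$, set $a=\min\{i,k\}$ and $b=\max\{i,k\}$, so that $a\geq 0$ and, crucially, $b\geq 1$. Monotonicity of the sequence then yields $\sigma_{a}\leq\sigma_{0}$ and $\sigma_{b}\leq\sigma_{1}$, whence $\sigma_{i}\sigma_{k}=\sigma_{a}\sigma_{b}\leq\sigma_{0}\sigma_{1}$. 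Thus every diagonal entry of the difference of the two sides is non-negative; being diagonal, the difference has its diagonal entries as its eigenvalues, so it is positive semidefinite, which is exactly the assertion.

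Two remarks complete the picture. The normalization $\sum_{i}\sigma_{i}^{2}=1$ is never invoked: it is carried along because the $\sigma_{i}$ are generalized Schmidt coefficients, but it is irrelevant to this estimate. Also, the statement tacitly presupposes $r\geq 2$ so that $\sigma_{1}$ exists (for $r=1$ there is nothing to prove). The only point that genuinely needs care — and the single place a careless argument would fail — is the observation that among two distinct indices the larger is at least $1$, so that the smaller of the two coefficients $\sigma_{i},\sigma_{k}$ is bounded by $\sigma_{1}$ and not merely by $\sigma_{0}$; without this one would obtain only the weaker and useless bound $\sigma_{i}\sigma_{k}\leq\sigma_{0}^{2}$. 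Everything else is a routine reduction to scalar inequalities.
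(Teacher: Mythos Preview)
Your proof is correct and follows essentially the same route as the paper's own argument: reduce the diagonal operator inequality to the scalar inequalities $\sigma_{0}\sigma_{1}+\sigma_{i}^{2}\geq\sigma_{i}^{2}$ and $\sigma_{0}\sigma_{1}\geq\sigma_{i}\sigma_{k}$ for $k\neq i$, which the paper simply declares ``elementary.'' Your treatment is in fact more careful, spelling out why $\max\{i,k\}\geq 1$ is the key to the second bound and noting that the normalization hypothesis is unused.
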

\begin{proof}
This is equivalent to prove the following two numeric inequalities for all  $i=0,\dots,r-1$:
\begin{eqnarray}
\sigma_{0}\sigma_{1}+\sigma_{i}^{2}&\geq&\sigma_{i}^{2},\nn\\
\sigma_{0}\sigma_{1}&\geq &\sigma_{i}\sigma_{k},\qquad \forall k\neq i.\nn
\end{eqnarray}

Both are elementary.

\end{proof}

\begin{lemma}\label{PPT}
Let $Q=(1-\lambda)\frac{\mathbb{I}_{N_{1}N_{2}}}{N_{1}N_{2}}+\lambda\ket{\sigma}\bra{\sigma}$, where $\ket{\sigma}=\sum_{i=0}^{r-1}\sigma_{i}\ket{ii}$. Then $Q^{T_{1}}\geq 0$ if, and only if, $\lambda\leq\frac{1}{1+N_{1}N_{2}\sigma_{0}\sigma_{1}}$.
\end{lemma}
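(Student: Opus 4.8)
The plan is to compute the partial transpose $Q^{T_1}$ explicitly and find the exact condition on $\lambda$ for which it remains positive semidefinite. First I would write $Q^{T_1}$ in the computational basis. The identity part is invariant under partial transposition, contributing $(1-\lambda)\frac{\mathbb{I}_{N_1N_2}}{N_1N_2}$. For the rank-one part, note that $\ket{\sigma}\bra{\sigma}=\sum_{i,j=0}^{r-1}\sigma_i\sigma_j\,\ket{ii}\bra{jj}$, so taking the transpose on the first subsystem sends $\ket{ii}\bra{jj}=\ket{i}\bra{j}\otimes\ket{i}\bra{j}$ to $\ket{j}\bra{i}\otimes\ket{i}\bra{j}$. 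Hence
\begin{equation}
\left(\ket{\sigma}\bra{\sigma}\right)^{T_1}=\sum_{i,j=0}^{r-1}\sigma_i\sigma_j\,\ket{ji}\bra{ij}.
\end{equation}

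Next I would diagonalize this operator. The terms with $i=j$ give $\sum_i \sigma_i^2\ket{ii}\bra{ii}$, which is diagonal and supported on the vectors $\ket{ii}$. The off-diagonal terms with $i\neq j$ pair up: for each unordered pair $\{i,j\}$ with $i<j$, the block spanned by $\{\ket{ij},\ket{ji}\}$ carries the $2\times 2$ matrix $\sigma_i\sigma_j\begin{pmatrix}0&1\\1&0\end{pmatrix}$, whose eigenvalues are $\pm\sigma_i\sigma_j$. So the spectrum of $\left(\ket{\sigma}\bra{\sigma}\right)^{T_1}$ consists of $\{\sigma_i^2\}_{i}$, $\{+\sigma_i\sigma_j\}_{i<j}$, $\{-\sigma_i\sigma_j\}_{i<j}$, together with zeros on the orthogonal complement of the relevant span. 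The most negative eigenvalue is $-\sigma_0\sigma_1$, attained on the vector $\frac{1}{\sqrt2}(\ket{01}-\ket{10})$, since $\sigma_0\geq\sigma_1\geq\cdots$ maximizes $\sigma_i\sigma_j$ over $i\neq j$ at $(i,j)=(0,1)$.

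Finally I would assemble the positivity condition. Since the identity and $\left(\ket{\sigma}\bra{\sigma}\right)^{T_1}$ are simultaneously diagonalizable (the eigenbasis above diagonalizes both), $Q^{T_1}\geq 0$ iff every eigenvalue $\frac{1-\lambda}{N_1N_2}+\lambda\,\mu$ is nonnegative, where $\mu$ ranges over the eigenvalues of $\left(\ket{\sigma}\bra{\sigma}\right)^{T_1}$. The binding constraint is the smallest $\mu$, namely $\mu=-\sigma_0\sigma_1$, giving $\frac{1-\lambda}{N_1N_2}-\lambda\sigma_0\sigma_1\geq 0$, i.e. $1-\lambda\geq \lambda N_1N_2\sigma_0\sigma_1$, which rearranges to $\lambda\leq\frac{1}{1+N_1N_2\sigma_0\sigma_1}$, as claimed. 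I do not anticipate a genuine obstacle here; the only point requiring a little care is verifying that the identity does not reshuffle the eigenspace structure — that is, checking that on the orthogonal complement of $\lin\{\ket{ij}:0\le i,j\le r-1\}$ the eigenvalue is simply $\frac{1-\lambda}{N_1N_2}>0$, and that the $\pm\sigma_i\sigma_j$ blocks indeed remain $2\times2$ blocks after adding a multiple of the identity, so no cross terms appear. This is immediate once the block decomposition is written down cleanly.
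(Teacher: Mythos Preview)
Your proof is correct and follows essentially the same approach as the paper: both compute $Q^{T_1}$ explicitly, block-diagonalize it, and identify the minimal eigenvalue as $\frac{1-\lambda}{N_1N_2}-\lambda\sigma_0\sigma_1$. The only cosmetic difference is that the paper names the eigenvectors of the $2\times 2$ swap blocks explicitly as the symmetric and antisymmetric states $\ket{s_{ij}}=\frac{1}{\sqrt{2}}(\ket{ij}+\ket{ji})$ and $\ket{a_{ij}}=\frac{1}{\sqrt{2}}(\ket{ij}-\ket{ji})$, whereas you diagonalize the $2\times 2$ matrix $\sigma_i\sigma_j\begin{pmatrix}0&1\\1&0\end{pmatrix}$ directly; these are of course the same computation.
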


\begin{proof}
The partial transpose of $Q$ is given by $$Q^{T_{1}}=(1-\lambda)\frac{\mathbb{I}_{N_{1}N_{2}}}{N_{1}N_{2}}+\lambda\left[\sum_{i=0}^{r-1}\sigma_{i}^{2}\ket{ii}\bra{ii}+\sum_{i=0}^{r-1}\sum_{\substack{j=0\\j\neq i}}^{r-1}\sigma_{i}\sigma_{j}\ket{s_{ij}}\bra{s_{ij}}-\sum_{i=0}^{r-1}\sum_{\substack{j=0\\j\neq i}}^{r-1}\sigma_{i}\sigma_{j}\ket{a_{ij}}\bra{a_{ij}}\right],$$

\noindent where $\ket{s_{ij}}=\frac{1}{\sqrt{2}}\left[\ket{ij}+\ket{ji}\right]$ and $a_{ij}=\frac{1}{\sqrt{2}}\left[\ket{ij}-\ket{ji}\right]$. Now the system of vectors $\{\ket{ii}\}_{i=0}^{\min(N_{1}-1,N_{2}-1)}\cup\{\ket{s_{ij}},\ket{a_{ij}}\}_{i,j=0}^{\min(N_{1}-1,N_{2}-1)}\cup\{\ket{s_{ij}},\ket{a_{ij}}\}_{i \textrm{ or } j=0,\dots,\min(N_{1}-1,N_{2}-1)}^{j\textrm{ or } i=\min(N_{1}-1,N_{2}-1)+1,\dots,\max(N_{1}-1,N_{2}-1)}$ is an orthonormal basis in $\mathbb{C}^{N_{1}}\otimes\mathbb{C}^{N_{2}}$. Thus the identity matrix $\mathbb{I}_{N_{1}N_{2}}$ admits a spectral decomposition in terms of the unidimensional orthogonal projectors onto these vectors. Summing up all the terms we have the spectral decomposition of $Q^{T_{1}}$ with eigenvalues of the alternative forms $\frac{1-\lambda}{N_{1}N_{2}}+\lambda\sigma_{i}^{2}$, $\frac{1-\lambda}{N_{1}N_{2}}+\lambda\sigma_{i}\sigma_{j}$ and $\frac{1-\lambda}{N_{1}N_{2}}-\lambda\sigma_{i}\sigma_{j}$, all of them will be nonnegative provided $\frac{1-\lambda}{N_{1}N_{2}}-\lambda\sigma_{i}\sigma_{j}\geq 0$ for all $i\neq j$. Since the generalized Schmidt coefficients constitutes a nonincreasing sequence, then $\lambda\geq\frac{1}{1+N_{1}N_{2}\sigma_{0}\sigma_{1}}$.

\end{proof}

For the next result we need to define $M=\max_{ijpq\in I}\{m_{ijpq}\}$ and $n_{0}=M+1$, where $m_{ijpq}$ and $I$ are defined as in lemma \ref{Roots}, $r\leq\min(N_{1}-1,N_{2}-1)$ and $\sigma_{i}$ are normalized and in non-increasing order. Define $\omega=e^{\frac{2\pi i}{n_{0}}}$.

\begin{lemma}\label{MatIden}
We have the identity

\begin{eqnarray}
\sum_{i=0}^{r-1}\ket{i}\bra{i}\otimes\left(\sigma^{2}_{i}\ket{i}\bra{i}+\sum_{\substack{k=0\\k\neq i}}^{r-1}\sigma_{i}\sigma_{k}\ket{k}\bra{k}\right)+\sum_{i=0}^{r-1}\sum_{\substack{j=0\\j\neq i}}^{r-1}\ket{i}\bra{j}\otimes\sigma_{i}\sigma_{j}\ket{i}\bra{j}&=&\nn\\
\frac{\left(\sum_{i=0}^{r-1}\sigma_{i}\right)^{2}}{n_{0}}\sum_{k=0}^{n_{0}-1}\ket{u_{k}}\bra{u_{k}}\otimes\ket{u_{k}^{*}}\bra{u_{k}^{*}},&&
\end{eqnarray}
\noindent where $\ket{u_{k}}=\sum_{i=0}^{r-1}\omega_{i}^{k}\sigma_{i}^{1/2}\ket{i}$ and $\ket{u_{k}^{*}}=\sum_{i=0}^{r-1}\omega_{i}^{*k}\sigma_{i}^{1/2}\ket{i}$.

\end{lemma}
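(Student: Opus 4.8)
The plan is to verify the identity by a brute-force expansion of the right-hand side followed by a collapse of the sum over $k$ using the two root-of-unity lemmas. First I would expand each summand of the right-hand side as a fourfold index sum,
\[
\ket{u_{k}}\bra{u_{k}}\otimes\ket{u_{k}^{*}}\bra{u_{k}^{*}}=\sum_{i,j,p,q=0}^{r-1}\left(\omega_{i}\omega_{j}^{*}\omega_{p}^{*}\omega_{q}\right)^{k}\sigma_{i}^{1/2}\sigma_{j}^{1/2}\sigma_{p}^{1/2}\sigma_{q}^{1/2}\,\ket{i}\bra{j}\otimes\ket{p}\bra{q},
\]
using that $\omega_{i}^{k}$ is a phase and $\sigma_{i}^{1/2}$ is real, so that $\ket{u_{k}^{*}}$ is the entrywise complex conjugate of $\ket{u_{k}}$. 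Summing over $k=0,\dots,n_{0}-1$ and exchanging the order of summation reduces everything to evaluating $\sum_{k=0}^{n_{0}-1}(\omega_{i}\omega_{j}^{*}\omega_{p}^{*}\omega_{q})^{k}$ for each quadruple $(i,j,p,q)$.

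The heart of the argument is lemma \ref{Roots}: the phase $\omega_{i}\omega_{j}^{*}\omega_{p}^{*}\omega_{q}$ equals $1$ precisely in the two ``diagonal'' cases $(i=j$ and $p=q)$ or $(i=p$ and $j=q)$, and otherwise equals $\omega^{m_{ijpq}}$ with $m_{ijpq}$ a nonzero integer. Since swapping $(i,j,p,q)\mapsto(j,i,q,p)$ flips the sign of $m_{ijpq}$, the exponents come in $\pm$ pairs, so $|m_{ijpq}|\le M<M+1=n_{0}$ and $\omega^{m_{ijpq}}$ is a nontrivial $n_{0}$-th root of unity; lemma \ref{CompRoots} then gives that the geometric sum over $k$ vanishes off the diagonal and equals $n_{0}$ on it. Hence only the diagonal quadruples survive, each carrying weight $n_{0}$, which absorbs the prefactor $1/n_{0}$.

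Finally I would gather the surviving terms by inclusion--exclusion over the two diagonal sets $\{i=j,\,p=q\}$ and $\{i=p,\,j=q\}$, whose intersection is $\{i=j=p=q\}$, to obtain
\[
\left(\sum_{i}\sigma_{i}\ket{i}\bra{i}\right)\otimes\left(\sum_{p}\sigma_{p}\ket{p}\bra{p}\right)+\sum_{i,j}\sigma_{i}\sigma_{j}\ket{i}\bra{j}\otimes\ket{i}\bra{j}-\sum_{i}\sigma_{i}^{2}\ket{i}\bra{i}\otimes\ket{i}\bra{i},
\]
and match this termwise with the left-hand side, using $\sigma_{i}^{2}\ket{i}\bra{i}+\sum_{k\neq i}\sigma_{i}\sigma_{k}\ket{k}\bra{k}=\sigma_{i}\sum_{k}\sigma_{k}\ket{k}\bra{k}$ to recognise the first block of the left-hand side as the product term above and the $i\neq j$ block as the remaining two terms. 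The only real obstacle is the middle step --- forcing the off-diagonal contributions to vanish --- which is exactly what the contrived choice of phases $\omega_{i}$ (alternating powers of two, conjugated on odd indices) and of $n_{0}=M+1$ is engineered to make work; the rest is bookkeeping, the one point to watch being to carry the normalisation $\langle u_{k}|u_{k}\rangle=\sum_{i}\sigma_{i}$ through so that the scalar prefactor $(\sum_{i}\sigma_{i})^{2}/n_{0}$ comes out right.
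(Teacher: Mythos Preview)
Your proposal is correct and follows essentially the same route as the paper: expand the right-hand side as a fourfold index sum, sum over $k$, and invoke Lemmas~\ref{CompRoots} and~\ref{Roots} so that only the two ``diagonal'' index patterns survive, then collect terms. You are slightly more explicit than the paper about the inclusion--exclusion over the two diagonal sets and about why $|m_{ijpq}|<n_{0}$ (via the sign-flip under $(i,j,p,q)\mapsto(j,i,q,p)$), but these are elaborations of the same argument rather than a different approach.
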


\begin{proof}
Taking into account the definitions we have

\begin{gather}
\left(\sum_{i=0}^{r-1}\sigma_{i}\right)^{2}\ket{u_k}\bra{u_{k}}\otimes\ket{u_{k}^{*}}\bra{u_{k}^{*}}=\sum_{ijpq=0}^{r-1}\omega_{i}^{k}\omega_{j}^{*k}\omega_{p}^{*k}\omega_{q}^{k}\sqrt{\sigma_{i}\sigma_{j}\sigma_{p}\sigma_{q}}\ket{i}\bra{j}\otimes \ket{p}\bra{q},
\end{gather}

\noindent which after summing up in the powers of $k$ yields
\begin{gather}
\sum_{ijpq=0}^{r-1}\left(\sum_{k=0}^{n_{0}-1}\omega_{i}^{k}\omega_{j}^{*k}\omega_{p}^{*k}\omega_{q}^{k}\right)\sqrt{\sigma_{i}\sigma_{j}\sigma_{p}\sigma_{q}}\ket{i}\bra{j}\otimes \ket{p}\bra{q},
\end{gather}

Taking into account lemmas  \ref{CompRoots} and \ref{Roots} for the choice of the base  $n_{0}$ of the roots, we will have \footnote{With $n_{0}=M+1$ for the complex roots of the unit, it is clear that $p/n_{0}$ will never be a natural number for any value $p\leq M$.}

\begin{gather}
\sum_{ij=0}^{r-1}n_{0}\sigma_{i}\sigma_{j}\ket{i}\bra{i}\otimes \ket{j}\bra{j}+\sum_{\substack{ij=0\\i \neq j}}^{r-1}n_{0}\sigma_{i}\sigma_{j}\ket{i}\bra{j}\otimes \ket{i}\bra{j},
\end{gather}

Dividing by $n_{0}$ we finally get the desired result.

\end{proof}

\begin{lemma}\label{SepWerner}
With the same definitions as above, we have
\begin{equation}
\sigma_{0}\sigma_{1}\mathbb{I}_{N_{1}N_{2}}+\sum_{i=1}^{r}\sum_{j=1}^{r}\sigma_{i}\sigma_{j}\ket{i}\bra{j}\otimes \ket{i}\bra{j}\in SEP_{N_{1}\otimes N_{2}}.
\end{equation}
\end{lemma}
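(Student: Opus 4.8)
The plan is to show that the Hermitian matrix
\[
W\;\equiv\;\sigma_{0}\sigma_{1}\mathbb{I}_{N_{1}N_{2}}+\sum_{i=0}^{r-1}\sum_{j=0}^{r-1}\sigma_{i}\sigma_{j}\ket{i}\bra{j}\otimes\ket{i}\bra{j}
\]
is separable by writing it explicitly as a conic (indeed finite) combination of product projectors, using the algebraic identity already established in Lemma \ref{MatIden}. First I would isolate the ``diagonal'' part of $W$: splitting the $i=j$ terms from the $i\neq j$ terms in the double sum, one sees that
\[
W=\sigma_{0}\sigma_{1}\mathbb{I}_{N_{1}N_{2}}+\sum_{i=0}^{r-1}\sigma_{i}^{2}\ket{i}\bra{i}\otimes\ket{i}\bra{i}+\sum_{i=0}^{r-1}\sum_{\substack{j=0\\ j\neq i}}^{r-1}\sigma_{i}\sigma_{j}\ket{i}\bra{j}\otimes\ket{i}\bra{j}.
\]
The term $\sum_{i}\sigma_{i}^{2}\ket{i}\bra{i}\otimes\ket{i}\bra{i}+\sum_{i\neq j}\sigma_{i}\sigma_{j}\ket{i}\bra{j}\otimes\ket{i}\bra{j}$ does \emph{not} coincide with the left-hand side of Lemma \ref{MatIden}; the discrepancy is precisely the extra diagonal mass $\sum_{i}\sum_{k\neq i}\sigma_{i}\sigma_{k}\ket{i}\bra{i}\otimes\ket{k}\bra{k}$ appearing there. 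So the strategy is to add and subtract this block: use Lemma \ref{DesigNumer} to absorb it. Concretely, Lemma \ref{DesigNumer} gives, for each $i$, the operator inequality
\[
\sigma_{0}\sigma_{1}\sum_{k=0}^{r-1}\ket{k}\bra{k}+\sigma_{i}^{2}\ket{i}\bra{i}\;\geq\;\sigma_{i}^{2}\ket{i}\bra{i}+\sum_{\substack{k=0\\ k\neq i}}^{r-1}\sigma_{i}\sigma_{k}\ket{k}\bra{k},
\]
so that $\sigma_{i}^{2}\ket{i}\bra{i}+\sum_{k\neq i}\sigma_{i}\sigma_{k}\ket{k}\bra{k}=\sigma_{0}\sigma_{1}\sum_{k}\ket{k}\bra{k}+\sigma_{i}^{2}\ket{i}\bra{i}-P_{i}$ for some positive \emph{diagonal} $P_{i}\geq 0$. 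Summing over $i$ (tensored with $\ket{i}\bra{i}$ on the first factor) recovers the left-hand side of Lemma \ref{MatIden} as $W$ plus a sum of product terms of the form (positive diagonal)$\,\otimes\,\ket{i}\bra{i}$, all manifestly separable.

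Thus the identity chain I would assemble is: Lemma \ref{MatIden} expresses
\[
\sum_{i=0}^{r-1}\ket{i}\bra{i}\otimes\Bigl(\sigma_{i}^{2}\ket{i}\bra{i}+\sum_{k\neq i}\sigma_{i}\sigma_{k}\ket{k}\bra{k}\Bigr)+\sum_{i\neq j}\ket{i}\bra{j}\otimes\sigma_{i}\sigma_{j}\ket{i}\bra{j}
=\frac{\bigl(\sum_{i}\sigma_{i}\bigr)^{2}}{n_{0}}\sum_{k=0}^{n_{0}-1}\ket{u_{k}}\bra{u_{k}}\otimes\ket{u_{k}^{*}}\bra{u_{k}^{*}},
\]
a sum of product projectors, hence separable. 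On the left, the first summand equals $\sum_{i}\ket{i}\bra{i}\otimes\bigl(\sigma_{0}\sigma_{1}\mathbb{I}_{r}+\sigma_{i}^{2}\ket{i}\bra{i}-P_{i}\bigr)$ with each $P_{i}\geq 0$ diagonal, and $\sum_{i}\ket{i}\bra{i}\otimes\sigma_{0}\sigma_{1}\mathbb{I}_{r}$ is the $r\times r$ block of $\sigma_{0}\sigma_{1}\mathbb{I}_{N_{1}N_{2}}$. Rearranging, $W$ (restricted to the relevant $r\otimes r$ block, the complementary diagonal block of $\sigma_{0}\sigma_{1}\mathbb{I}$ being trivially separable) equals the separable right-hand side of Lemma \ref{MatIden} \emph{plus} $\sum_{i}\ket{i}\bra{i}\otimes P_{i}$, which is a sum of positive-coefficient products of diagonal positive matrices. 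A nonnegative combination of separable matrices is separable, so $W\in\mathrm{SEP}_{N_{1}\otimes N_{2}}$, which is what Lemma \ref{SepWerner} asserts (and which then feeds directly into the proof of Proposition \ref{ExtPro} at $\lambda=\lambda^{*}(z)$, since the positive semidefinite matrix there is, up to the overall factor $1/(N_{1}N_{2})$ and the substitution $\lambda=\lambda^{*}$, exactly $W$).

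The main obstacle I anticipate is bookkeeping rather than conceptual: one must carefully track which index blocks are involved — the roots-of-unity construction in Lemma \ref{MatIden} only acts on the $r\times r$ corner where the $\sigma_{i}$ live, so the off-support part of $\sigma_{0}\sigma_{1}\mathbb{I}_{N_{1}N_{2}}$ has to be carried along separately and shown (trivially) separable, and one must verify that the diagonal ``remainder'' $\sum_{i}\ket{i}\bra{i}\otimes P_{i}$ picked up from Lemma \ref{DesigNumer} has exactly the right entries to make the algebra close, with no leftover negative terms. A secondary subtlety is making sure the overall normalization $\bigl(\sum_{i}\sigma_{i}\bigr)^{2}/n_{0}$ and the coefficient $\sigma_{0}\sigma_{1}$ are consistent; but since every term produced is a manifestly positive multiple of a product state, positivity of the combination is automatic and no delicate estimate beyond Lemma \ref{DesigNumer} is needed.
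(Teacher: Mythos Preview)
Your proposal is correct and follows essentially the same route as the paper's proof: split off the trivially separable part of $\sigma_{0}\sigma_{1}\mathbb{I}_{N_{1}N_{2}}$ outside the $r\times r$ block, then on that block use Lemma~\ref{DesigNumer} to write the diagonal excess as a separable remainder so that the rest is exactly the left-hand side of Lemma~\ref{MatIden}, which is separable by its right-hand side. One small slip to fix in your write-up: in the first paragraph you say the left-hand side of Lemma~\ref{MatIden} is ``$W$ plus'' separable terms of the form (positive diagonal)$\otimes\ket{i}\bra{i}$, but the correct relation (which you state properly in the second paragraph) is $W_{r\times r}=(\text{RHS of Lemma~\ref{MatIden}})+\sum_{i}\ket{i}\bra{i}\otimes P_{i}$, with $P_{i}$ on the \emph{second} factor and the sign such that $W$ is the \emph{larger} matrix.
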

\begin{proof}

Firstly, since $\mathbb{I}_{N_{1}N_{2}}=\sum_{i=0}^{N_{1}-1}\ket{i}\bra{i}\otimes\sum_{j=0}^{N_{2}-1}\ket{j}\bra{j}$ we can write

\begin{gather}
\sigma_{0}\sigma_{1}\mathbb{I}_{N_{1}N_{2}}+\sum_{i=0}^{r-1}\sum_{j=0}^{r-1}\sigma_{i}\sigma_{j}\ket{i}\bra{j}\otimes \ket{i}\bra{j}=\nn\\
=\sigma_{0}\sigma_{1}\sum_{i=0}^{r-1}\ket{i}\bra{i}\otimes\sum_{j=r}^{N_{2}-1}\ket{j}\bra{j}+\sigma_{0}\sigma_{1}\sum_{i=r}^{N_{1}-1}\ket{i}\bra{i}\otimes\sum_{j=0}^{r-1}\ket{j}\bra{j}+\sigma_{0}\sigma_{1}\sum_{i=r}^{N_{1}-1}\ket{i}\bra{i}\otimes\sum_{j=r}^{N_{2}-1}\ket{j}\bra{j}+\nn\\
+\left(\sigma_{0}\sigma_{1}\sum_{ij=0}^{r-1}\ket{i}\bra{i}\otimes \ket{j}\bra{j}+\sum_{i=0}^{r-1}\sum_{j=0}^{r-1}\sigma_{i}\sigma_{j}\ket{i}\bra{j}\otimes \ket{i}\bra{j}\right).
\end{gather}

Since the three first matrices in the right hand side are clearly separable, we will focus on the last one, which we rewrite as

\begin{gather}
\sum_{i=0}^{r-1}\ket{i}\bra{i}\otimes\left(\sigma_{0}\sigma_{1}\sum_{j=0}^{r-1}\ket{j}\bra{j}+\sigma_{i}^{2}\ket{i}\bra{i}\right)+\sum_{i=0}^{r-1}\sum_{\substack{j=0\\j\neq i}}^{r-1}\sigma_{i}\sigma_{j}\ket{i}\bra{j}\otimes \ket{i}\bra{j}.
\end{gather}

In view of lemma \ref{DesigNumer}, we can write

\begin{gather}
\sum_{i=0}^{r-1}\ket{i}\bra{i}\otimes\left(\sigma_{0}\sigma_{1}\sum_{j=0}^{r-1}\ket{j}\bra{j}+\sigma_{i}^{2}\ket{i}\bra{i}\right)+\sum_{i=0}^{r-1}\sum_{\substack{j=0\\j\neq i}}^{r-1}\sigma_{i}\sigma_{j}\ket{i}\bra{j}\otimes \ket{i}\bra{j}\geq \nn\\
\sum_{i=0}^{r-1}\ket{i}\bra{i}\otimes\left(\sigma_{i}^{2}\ket{i}\bra{i}+\sum_{\substack{j=0\\j\neq i}}^{r-1}\sigma_{j}\sigma_{i}\ket{j}\bra{j}\right)+\sum_{i=0}^{r-1}\sum_{\substack{j=0\\j\neq i}}^{r-1}\sigma_{i}\sigma_{j}\ket{i}\bra{j}\otimes \ket{i}\bra{j},
\end{gather}

\noindent where the difference is clearly a separable matrix. Now, this is exactly the matrix in lemma \ref{MatIden}, which is separable.

\end{proof}


\end{document}